\def\math#1{$#1$}
\def\frac#1#2{{#1\over #2}}
\DeclareSymbolFont{AMSb}{U}{msb}{m}{n}
\DeclareMathSymbol{\N}{\mathbin}{AMSb}{"4E}
\DeclareMathSymbol{\Z}{\mathbin}{AMSb}{"5A}
\DeclareMathSymbol{\R}{\mathbin}{AMSb}{"52}
\DeclareMathSymbol{\Q}{\mathbin}{AMSb}{"51}
\DeclareMathSymbol{\I}{\mathbin}{AMSb}{"49}
\DeclareMathSymbol{\C}{\mathbin}{AMSb}{"43}
\newcommand{\blue}[1]{{\color[named]{Blue} #1}}
\newcommand{\green}[1]{{\color[named]{Green} #1}}
\def\dotfil{\leaders\hbox to 1.5mm{.}\hfill}
\newcounter{rmnum}
\def\RN#1{\setcounter{rmnum}{#1}\uppercase\expandafter{\romannumeral\value{rmnum}}}
\def\rn#1{\setcounter{rmnum}{#1}\expandafter{\romannumeral\value{rmnum}}}
\newtheorem{lemma}{Lemma}
\newtheorem{theorem}{Theorem}
\title{
Extracting Hidden Groups and their 
Structure from Streaming Interaction Data 
}
\author{
Mark K. Goldberg \\
{\sf goldberg@cs.rpi.edu}
\and
Mykola Hayvanovych \\
{\sf hayvam@cs.rpi.edu}
\and 
Malik Magdon-Ismail \\
{\sf magdon@cs.rpi.edu}
\and
William A. Wallace \\
{\sf wallaw@rpi.edu}
}
\date{
Rensselaer Polytechnic Institute,\\
110 8th Street, Troy, NY 12180, USA.\\
{\sf \{goldberg,hayvam,magdon\}@cs.rpi.edu; wallaw@rpi.edu}.\\
\today
}
\begin{document}
\sloppy
\maketitle

\begin{abstract}
When actors in a social network interact, it usually means they 
have some general goal towards which they are collaborating. This could 
be a research collaboration in a company or a foursome planning a
golf game. We call such groups \emph{planning groups}. In many 
social contexts, it might be possible to observe the 
\emph{dyadic interactions} between actors, even if the actors do not 
explicitly declare what groups they belong too. When groups are not
explicitly declared, we call them \emph{hidden groups}. Our particular focus is
hidden planning groups. By virtue of their need to further their
goal, the actors within such 
groups must interact in a manner
which differentiates their communications
from random background communications. 
In such a case, one can infer (from these interactions)
 the composition  and structure of the hidden planning groups.
We formulate the problem of 
hidden group discovery from streaming interaction data, and 
we propose efficient algorithms for identifying 
the hidden group structures by isolating the hidden group's 
non-random, planning-related, 
communications from the random background communications.
We validate our algorithms on real data
(the Enron email corpus 
and Blog communication data).
Analysis of the results reveals that our algorithms 
extract meaningful hidden group structures. 
\end{abstract}

\section{Introduction}
\label{section:intro}

Communication networks (telephone, email, Internet chatroom, etc.) 
facilitate rapid information exchange among millions of users around 
the world,
providing the ideal environment for groups
to plan their activity undetected:
their communications are embedded (hidden) within the myriad of
unrelated communications.
A group may communicate in a structured way while
not being forthright about its existence.
However, when the group must exchange 
communications to plan some activity, their \emph{need}
 to communicate
usually imposes some structure on their communications.
We develop  statistical and algorithmic approaches for 
discovering such hidden groups that \emph{plan} an activity.
Hidden group members may have non-planning related communications,
be malicious (e.g. a terrorist
group) 
or benign (e.g. a golf foursome).
We liberally use ``hidden group''
for all such groups involved in planning, even  though they
may not intentionally be hiding their communications.

The tragic events of September 11, 2001
 underline the need for a tool  which 
aides in the discovery of 
hidden  groups during their {\it planning} 
stage, before implementation.
One approach to discovering such groups is using
{\it correlations} among the group member communications.
The {\it communication graph} of the society
 is defined by its 
actors (nodes) and communications (edges).
We do not use communication content,
even though it
can be informative through some natural 
language processing, because
such analysis 
is time consuming and intractable for large datasets.
We use only the time-stamp, sender and recipient ID of a message.

Our approach of discovering hidden groups is based on the observation that
the pattern of communications exhibited by a group
pursuing a common objective
is different from that of a randomly selected set of actors:
any group, even one which tries to hide itself,
must communicate regularly to plan.
One possible instance of such correlated communication
is the occurrence of a
{\it repeated communication pattern.}
Temporal correlation emerges as the members of a group 
need to systematically exchange messages to 
plan their future
activity. 
This correlation among the group communications
will exist throughout the 
planning stage, which may be some extended period of
time. If the planning occurs over a long enough period, this temporal
correlation will stand out
against a random background of communications and hence can be
detected.

\section{Streaming Hidden Groups}

\begin{figure}[t]
\centering
{\small
\begin{tabular}{p{2in}p{0.62in}}
\multicolumn{1}{m{3in}}{\footnotesize
\begin{tabular}{cll}
00&\blue{{\bf A}\math{\rightarrow}{\bf C}}
&\blue{Golf tomorrow? Tell everyone.}\\
05&\blue{{\bf C}\math{\rightarrow}{\bf F}}
&\blue{Alice mentioned golf tomorrow.}\\
06&\blue{{\bf A}\math{\rightarrow}{\bf B}}
&\blue{Hey, golf tomorrow? Spread the word}\\
12&\green{{\bf A}\math{\rightarrow}{\bf B}}
&\green{Tee time: 8am; Place: Pinehurst.}\\
13&\blue{{\bf F}\math{\rightarrow}{\bf G}}
&\blue{Hey guys, golf tomorrow .}\\
13&\blue{{\bf F}\math{\rightarrow}{\bf H}}
&\blue{Hey guys, golf tomorrow .}\\
15&\green{{\bf A}\math{\rightarrow}{\bf C}}
&\green{Tee time: 8am; Place: Pinehurst.}\\
20&\blue{{\bf B}\math{\rightarrow}{\bf D}}
&\blue{We're playing golf tomorrow.}\\
20&\blue{{\bf B}\math{\rightarrow}{\bf E}}
&\blue{We're playing golf tomorrow.}\\
22&\green{{\bf C}\math{\rightarrow}{\bf F}}
&\green{Tee time: 8am; Place: Pinehurst.}\\
25&\green{{\bf B}\math{\rightarrow}{\bf D}}
&\green{Tee time: 8am; Place: Pinehurst.}\\
25&\green{{\bf B}\math{\rightarrow}{\bf E}}
&\green{Tee time 8am, Pinehurst.}\\
31&\green{{\bf F}\math{\rightarrow}{\bf G}}
&\green{Tee time 8am, Pinehurst.}\\
31&\green{{\bf F}\math{\rightarrow}{\bf H}}
&\green{Tee off 8am,Pinehurst.}
\end{tabular}
}
&
\multicolumn{1}{m{1in}}{\footnotesize
\begin{tabular}{cl}
00&{{\bf A}\math{\rightarrow}{\bf C}}\\
05&{{\bf C}\math{\rightarrow}{\bf F}}\\
06&{{\bf A}\math{\rightarrow}{\bf B}}\\
12&{{\bf A}\math{\rightarrow}{\bf B}}\\
13&{{\bf F}\math{\rightarrow}{\bf G}}\\
13&{{\bf F}\math{\rightarrow}{\bf H}}\\
15&{{\bf A}\math{\rightarrow}{\bf C}}\\
20&{{\bf B}\math{\rightarrow}{\bf D}}\\
20&{{\bf B}\math{\rightarrow}{\bf E}}\\
22&{{\bf C}\math{\rightarrow}{\bf F}}\\
25&{{\bf B}\math{\rightarrow}{\bf D}}\\
25&{{\bf B}\math{\rightarrow}{\bf E}}\\
31&{{\bf F}\math{\rightarrow}{\bf G}}\\
31&{{\bf F}\math{\rightarrow}{\bf H}}\\
\end{tabular}
}
\\
\centerline{(a)}&
\centerline{(b)}
\end{tabular}
}
\caption{
Streaming hidden group with two waves of planning (a).  
Streaming group without message content -- only time, sender id and 
receiver id are available (b).}
\label{fig:stream}
\end{figure}

Unlike in the cyclic hidden group setting \cite{hidden2004} where  
all of the hidden group members communicate within 
some characteristic time period, and do so repeatedly over a consecutive
sequence of time periods. 
A \emph{streaming hidden group} doesn't obey such strict 
requirements for its communication pattern.
Hidden groups don't necessarily
display a fixed time-cycle, during which
all members of group members exchange messages, but whenever a step 
in the planning needs to occur, some hidden group member initiates
a communication, which percolates through the hidden group.
The hidden group problem may still be formulated as one of
finding repeated (possibly overlapping)
communication patterns.
An example of a streaming hidden group is illustrated in 
Fig.~\ref{fig:stream}(a) with the same group planning
golf game.
Given the message content, it is easy to identify
two ``waves'' of communication. The first wave (in darker font)
establishes 
the golf game; the second wave (in lighter font)
finalizes the game details. 
Based on this
data, it is not hard to identify the 
group and conclude that the ``organizational structure''
of the group is represented in Fig.~\ref{fig:stream_graph} 
to the right (each actor is
represented by their first initial). The challenge, once again,
 is to deduce this
same information from the communication stream \emph{without} the message 
contents Fig.~\ref{fig:stream}(b).
Two features that distinguish the stream from the
cycle model are:

(\rn{1})
communication waves may overlap, as in Fig.~\ref{fig:stream}(a);

(\rn{2})
waves may have different durations, some considerably longer than others.

The first feature may result in bursty waves of intense communication (many
overlapping waves) followed by periods of silence. Such a type of communication 
dynamics is hard to detect in the cycle model, since all the (overlapping)
waves of communication may fall in one cycle. The second can be quantified by
a propagation delay function which specifies how much time may elapse between
a hidden group member receiving the message and forwarding it to the next
member; sometimes the propagation delays may be large, and sometimes small.
One would typically expect 
\begin{wrapfigure}{r}{1.3in}
\vspace*{-0.1in}
\resizebox{1.2in}{!}{\includegraphics*{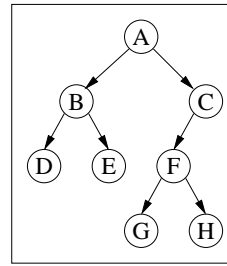}}
\caption{\small Group structure in Fig. \ref{fig:stream}}
\label{fig:stream_graph}
\end{wrapfigure}
that such a streaming model would be appropriate for hidden groups with
some organizational structure as illustrated in the tree
in Fig.~\ref{fig:stream_graph}. We present algorithms which 
discover the streaming hidden group and its organizational structure 
without the use of message content.

We use the notion of communication frequency in order to distinguish nonrandom 
behavior. Thus, if a group of actors communicates unusually often using
the same chain of communication, i.e. the structure of their communications
persists through time, then we consider this group to be statistically significant
and indicative of a hidden group. We present algorithms to detect small frequent
tree-like structures, and build hidden structures starting from the small ones.

\section{Our Contributions}
We present
efficient
algorithms 
which
not only discover the streaming hidden group, but also its organizational
structure \emph{without the use of message content}.
We use the notion of communication frequency
in order to distinguish non-random behavior.
Thus, if a group of actors communicates unusually often using
the same chain of communication,
i.e. the structure of their communications persists through time, 
then we consider this 
group to be statistically anomalous.
We present algorithms to detect small frequent tree-like 
structures, and build hidden structures starting from the small ones.
We also propose an approach that uses new cluster matching algorithms 
together with a sliding window technique to track and observe 
the evolution of hidden groups over time. 
We also present a general query algorithm which can determine
if a given hidden group (represented as a tree) occurs frequently
in the communication stream. Additionally we propose efficient algorithms 
to obtain the frequency of general trees and to enumerate all statistically 
significant general trees of a specified size and frequency. 
Such algorithms are used in conjunction with the heuristic algorithms and 
similarity measure techniques to 
verify that a discovered tree-like structure actually occurs frequently in 
the data. 
We validate our algorithms on the Enron email corpus, as well as the 
Blog communication data. 

\emph{Paper Organization.}
First we consider related work,
followed by the methodologies for the streaming hidden groups and tree mining 
in Section~\ref{sec:stream}. 
Next we present similarity measure methods in 
Section~\ref{sec:Measuring Similarity between Sets of Overlapping Clusters}. 
We present experiments on real world data and validation results 
in Section~\ref{sec:enron-data} followed by the summary and conclusions 
in Section~\ref{sec:conclusion}. 

\section{Related Work}
\label{sec:relatedwork}

Identifying structure in networks has been studied 
extensively in the context of clustering and partitioning (see for example 
\cite{bach2002,cluster2005a,cluster2005b,clauset2005,cluster2005bb,capocci2004,flake2002,
girvan2002,hendrickson1995,kannan2004,karypis1998b,kernighan1970,dimacs2003,newman2003}). 
These approaches focus on static, non-planning, hidden groups.
In \cite{hidden2003} Hidden Markov models are the basis for discovering
planning hidden groups. The underlying
methodology is based on random graphs \cite{bollobas3,janson1}
and some of the results on cyclic hidden groups were
presented in \cite{hidden2004}.
In our work we incorporate some of the prevailing social science theories, such as 
homophily (\cite{monge2002}), by incorporating
group structure.
More models of societal evolution and simulation
can be found in
\cite{carley1,carley2,malik28,sanil1,malik30,Palla2007,222956,networkmech5} which deal with
dynamic models for social network infrastructure, rather than the dynamics of
the actual communication behavior. 

Our work is novel because we detect hidden groups
by only analyzing
communication intensities (and not message content).
The study of streaming hidden groups was initiated in 
\cite{hidden2006}, which contains some preliminary results. 
We extend these results and present a general query algorithm 
which can find if a given hidden group (represented as a tree) occurs frequently
in the communication stream, which we extended to algorithm 
to obtain the frequency of general trees and to enumerate all statistically 
significant general trees of a specified size and frequency. 
Such algorithms are used in conjunction with the heuristic algorithms 
and similarity measures to verify that a discovered tree-like 
structure actually occurs frequently in the data. 

Erickson,
\cite{secretsoc}, was one of the first to  study
secret societies. His focus was on general
communication structure.
Since the
September 11, 2001 terrorist plot,
discovering hidden groups became a topic of intense research.  
For example it was understood that
Mohammed Atta
was central to the planning, but that a large percent 
of the network would need to be removed to render it inoperable
\cite{sixdegrees,uncloaking}. 
Krebs,
\cite{uncloaking} identified the network as sparse,
which renders it hard to discover through clustering 
in the 
traditional sense (finding dense subsets).
Our work on temporal correlation would address exactly such a situation.
It has also been observed that terrorist  group structure may be changing
\cite{netwars},
and our methods are based on connectivity which is immune to this trend. 
We assume that message authorship is known,
which may not be true, Abbasi and Chen propose 
techniques to address this issue, \cite{abbasi2005}.

\section{Problem Statement}
\label{sec:ProblemStatement}

A hidden group communication structure can be represented by a 
directed graph. Each vertex is an actor and every edge shows 
the direction of the communication. For example a hierarchical organization 
structure could be represented by a directed tree.
The graph in Figure \ref{fig:commun_struct} to the right
is an example of a communication structure, in which 
actor $A$ ``simultaneously'' sends messages to $B$ and $C$;
then, after receiving the message from $A$, $B$ sends messages to $C$ and $D$; 
$C$ sends a message to $D$ after receiving the messages from $A$ and $B$. 
Every graph has two basic types of communication structures: \emph{chains} and 
\emph{siblings}. A \emph{chain} is a path of length at least
3, and a \emph{sibling} is a tree with 
a root and  two or more children, but no other nodes. 
Of particular interest are chains and sibling trees with three nodes,
which we denote \emph{triples}.
For example, the chains and sibling trees 
of size three (triples) 
in the communication structure above are:
$A \rightarrow B \rightarrow D$; 
$A \rightarrow B \rightarrow C$; 
$A \rightarrow C \rightarrow D$;
$B \rightarrow C \rightarrow D$; 
$A \rightarrow (B,C)$;
and,
$B\rightarrow (C,D)$.
We suppose that a hidden group employs a communication structure that 
can be represented by a directed graph as above. If the hidden group is 
hierarchical, the communication graph will be a tree.
The task is to discover such a group and its structure based solely on 
the communication data.

If  a communication structure appears in the data many times, then it is
likely to be non-random, and hence represent a hidden group. To discover
hidden groups, we will discover the communication structures that appear 
many times. We thus need to define what it means for a communication
structure to ``appear''.
\begin{wrapfigure}{r}{1.4in}
\vspace*{-0.4in}
  \begin{center}
   \resizebox{1.2in}{1.2in} {\input{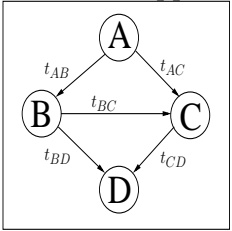}}
  \end{center}
  \caption{\small Target group.}
  \label{fig:commun_struct}
\end{wrapfigure}
Specifically, we consider 
chain and 
sibling triples (trees of size three). For a 
chain $A \rightarrow B \rightarrow C$ to appear, 
there must be communication $A \rightarrow B$ 
at time $t_{AB}$ and a communication $B \rightarrow C$ 
at time $t_{BC}$ such that 
$(t_{BC} - t_{AB}) \in [\tau_{min}, \tau_{max}]$. This intuitively 
represents the notion of causality, where $A \rightarrow B$ ``causes''
 $B \rightarrow C$ within some time 
interval specified by $[\tau_{min}$, $\tau_{max}]$. 
A similar requirement holds for the sibling triple $A \rightarrow B,C$;
the sibling triple 
appears if there exists $t_{AB}$ and $t_{AC}$ such that 
 $(t_{AB} - t_{AC}) \in [-\delta$ $\delta]$. 
This constraint represents the 
notion of $A$ sending messages ``simultaneously''
 to $B$ and $C$ within a small time interval 
of each other, as specified by \math{\delta}.
For an entire graph (such as the one above) to appear, every chain 
and sibling triple in the graph must appear 
using a single set of times. For example, in the graph example
above, there must exist 
a set of times,
\math{\{{t_{AB},t_{AC},t_{BC},t_{BD},t_{CD}}\}}, 
which 
satisfies all the six chain and sibling constraints: 
$(t_{BD} - t_{AB}) \in [\tau_{min}, \tau_{max}]$, 
$(t_{BC} - t_{AB}) \in [\tau_{min}, \tau_{max}]$, 
$(t_{CD} - t_{AC}) \in [\tau_{min}, \tau_{max}]$, 
$(t_{CD} - t_{BC}) \in [\tau_{min}, \tau_{max}]$, 
$(t_{AB} - t_{AC}) \in [- \delta, \delta]$ and 
$(t_{BD} - t_{BC}) \in [- \delta, \delta]$. 
A graph appears multiple times if there are disjoint sets of times 
each of which is an appearance 
of the graph. 
A set of times \emph{satisfies}
 a 
graph if all chain and sibling constraints are satisfied by the set of 
times. 
The number of times a graph 
appears is the maximum number of disjoint sets of times
that can be found, where each set 
satisfies the graph. Causality requires that multiple occurrences of a graph should
monotonically increase in time. Specifically,
if $t_{AB}$ ``causes'' \math{t_{BC}} and 
\math{t'_{AB}} ``causes'' \math{t'_{BC}} with \math{t'_{AB}>t_{AB}},
then it should be that \math{t'_{BC}>t_{BC}}.
In general,
if we have two disjoint occurrences (sets of times)
\math{\{t_1,t_2,\ldots\}} and \math{\{s_1,s_2,\ldots\}}
with \math{s_1>t_1}, then it should be that \math{s_i>t_i} for all \math{i}.
A communication structure which is frequent
enough becomes statistically significant when its frequency 
exceeds the expected frequency of such a structure from
the random background communications. 
The goal is to find all statistically significant communication
structures, which is formally stated in the following algorithmic
problem statement.

{\bf Input:}  
A communication data stream and parameters: $\delta$, $\tau_{min}$, $\tau_{max}$, 
$h$, $\kappa$.

{\bf Output:}
 All communication structures
 of size $\geq h$, which appear at least $\kappa$ times, where 
the appearance is defined with respect to 
$\delta$, $\tau_{min}$, $\tau_{max}$. 
\smallskip

Assuming we can solve this algorithmic task, the statistical task is to determine 
\math{h} and $\kappa$ 
to ensure that all the output communication structures reliably correspond to 
non-random ``hidden groups''. We first consider small trees, 
specifically chain and sibling triples.
We then develop a heuristic to build up larger hidden groups 
from clusters of triples. Additionally we mine all of the frequent 
directed acyclic graphs and propose new ways of measuring the similarity 
between sets of overlapping sets. We 
obtain evolving hidden groups by using a sliding window in conjunction with 
the proposed similarity measures to determine the rate of evolution.

\label{sec:stream}
\section{Algorithms for Chain and Sibling Trees}
\label{sec:Algorithms for Chain and Sibling Trees}

We will start by introducing a technique to find chain and sibling triples, 
i.e. trees of type $A \rightarrow B \rightarrow C$ (chain) 
and trees of type $A \rightarrow (B,C)$ (sibling). To accomplish this, we 
will enumerate all the triples and count the number 
of times each triple occurs. Enumeration can be done by brute force, i.e.
 considering each possible triple 
in the stream of communications. We have developed a general algorithm for 
counting the number of occurrences of chains of length $\ell$, 
and siblings of width $k$. These algorithms proceed by posing the problem 
as a multi-dimensional matching problem, which in the case of tipples 
becomes a two-dimensional matching problem. Generally multi-dimensional 
matching is hard to solve, but in our case the causality 
constraint imposes an ordering 
on the matching which allows us to construct a linear time algorithm. 
Finally we will introduce a heuristic to build larger 
graphs from statistically significant triples using overlapping clustering 
techniques \cite{cluster2005a}.

\subsection{Computing the Frequency of a Triple}
Consider the triple $A \rightarrow B \rightarrow C$ and the associated 
time lists $L_1 = \{ t_1 \leq t_2 \leq \ldots \leq t_n \}$ and 
$L_2 = \{ s_1 \leq s_2 \leq \cdots \leq s_m \}$, where $t_i$ are the times 
when $A$ sent to $B$ and $s_i$ the times when $B$ sent to $C$. 
An occurrence of the triple $A \rightarrow B \rightarrow C$ is a pair of 
times ($t_i$,$s_i$) such that $(s_i - t_i) \in [\tau_{min}$ 
$\tau_{max}]$. Thus, we would like to find the maximum number of such pairs 
which satisfy the causality constraint. It turns out that 
the causality constraint does not affect the size of the maximum matching, 
however it is an intuitive constraint in our context. 

We now define a slightly more general maximum matching problem: for a 
pair $(t_i, s_i)$ let $f(t_i, s_i)$ denote the score of the pair. 

Let $M$ be a matching $\{ (t_{i_1},s_{i_1}),(t_{i_2},s_{i_2}) 
\ldots (t_{i_k},s_{i_k}) \}$ of size $k$. 
We define the score of $M$ as 

\[
Score(M) = \sum_{j=1}^k f(t_{i_j},s_{i_j}).
\]
The maximum matching problem is to find a matching with a maximum score. 
The function $f(t,s)$ captures how likely a message from 
$B \rightarrow C$ at time $s$ was ``caused'' by a message from $A 
\rightarrow B$ at time $t$. In our case we are using a hard threshold function 

\[
\ f(t,s)=f(t-s)= \left\{
\begin{array}{ll}
1 & \mbox{if } t-s $ $ \in $ $ [\tau_{min}, \tau_{max}] , \\
0 & \mbox{otherwise} .\\
\end{array} \right. 
\]
The matching problem for sibling triples is identical with the choice 

\[
\ f(t,s)=f(t-s)= \left\{
\begin{array}{ll}
1 & \mbox{if } t-s $ $ \in $ $ [-\delta, \delta] , \\
0 & \mbox{otherwise} .\\
\end{array} \right. 
\]
We can generalize to chains of arbitrary length and siblings of arbitrary 
width as follows. Consider time lists 
$L_1$, $L_2$, $\ldots$ ,$L_{\ell -1}$ corresponding to the chain $A_1 
\rightarrow A_2 \rightarrow A_3 \rightarrow \cdots \rightarrow A_{\ell}$, where $L_i$ contains the 
sorted times of communications $A_i \rightarrow A_{i+1}$. 
An occurrence of this chain is now an $\ell -1$ dimensional matching 
$\{t_1,t_2, \ldots , t_{\ell -1}\}$ satisfying the 
constraint ($t_{i+1} - t_i$) $\in$ [$\tau_{min}$ $\tau_{max}$] 
$\forall$ $i=1$,$\cdots$,$\ell -2$. 

The sibling of width $k$ breaks down into two cases: ordered siblings 
which obey constraints similar to the chain constraints, 
and unordered siblings. Consider the sibling tree $A_0 \rightarrow A_1,A_2, 
\cdots A_k$ with corresponding time lists 
$L_1$, $L_2$, $\ldots$ ,$L_k$, where $L_i$ contains the times of 
communications $A_0 \rightarrow A_i$. An occurrence is a matching 
$\{t_1,t_2, \ldots , t_k\}$. In the ordered case the constraints 
are $(t_{i+1} - t_i) \in [-\delta$ $\delta]$. 
This represents $A_0$ sending communications ``simultaneously'' 
to its recipients in the order $A_1, \ldots ,A_k$. The unordered sibling tree 
obeys the stricter constraint $(t_i - t_j)$ $\in$ $[-(k-1)\delta, 
(k-1)\delta]$, $\forall$ $i,j$ pairs, $i \neq j$. This stricter constraint 
represents $A_0$ sending communications to its 
recipients ``simultaneously'' without any particular order.

Both problems can be solved with a greedy algorithm. 
The detailed algorithms for arbitrary chains and siblings are given in 
Figure \ref{fig:algs}(a). 
Here we sketch the algorithm for triples. Given two 
time lists $L_1$=$\{ t_1, t_2, \ldots, t_n \}$ and 
$L_2$=$\{ s_1, s_2, \ldots, s_m \}$ the idea is to find the first 
valid match $(t_{i_1}, s_{i_1})$, which is the first pair 
of times that obey the constraint 
$(s_{i_1} - t_{i_1}) \in  [\tau_{min}$ $\tau_{max}]$, then 
recursively find the maximum matching on the 
remaining sub lists $L_1'=\{ t_{i_1+1}, \ldots, t_n \}$ 
and $L_2'=\{ s_{i_1+1}, \ldots, s_m \}$.

The case of general chains and ordered sibling trees is similar.
The first 
valid match is defined similarly. 
Every pair of entries $t_{L_i}$ $\in$ $L_i$ and $t_{L_{i+1}}$ 
$\in$ $L_{i+1}$ in the maximum matching must obey 
the constraint $(t_{L_{i+1}} - t_{L_i}) \in [\tau_{min}$ $\tau_{max}]$. 
To find the first valid match, 
we begin with the match consisting of the first time in all lists. 
Denote these times $t_{L_1}, t_{L_2}, \ldots, t_{L_{\ell}}$. 
If this match is valid (all consecutive pairs satisfy the constraint) 
then we are done. Otherwise consider the 
first consecutive pair to violate this constraint. Suppose it is 
$(t_{L_i}, t_{L_{i+1}})$; so either 
$(t_{L_{i+1}} - t_{L_i}) >  \tau_{max}$ or $(t_{L_{i+1}} - t_{L_i}) <  
\tau_{min}$. If $(t_{L_{i+1}} - t_{L_i}) >  \tau_{max}$ 
($t_{L_i}$ is too small), we advance $t_{L_i}$ to the next entry in the 
time list $L_i$; otherwise 
$(t_{L_{i+1}} - t_{L_i}) <  \tau_{min}$ ($t_{L_{i+1}}$ is too small) 
and we advance $t_{L_{i+1}}$ to the next entry in the 
time list $L_{i+1}$. This entire process is repeated until a valid 
first match is found. An efficient implementation of this 
algorithm is given in Figure \ref{fig:algs}. The algorithm for 
unordered siblings follows a similar logic.

\begin {figure}
\begin{center}
\hspace*{-0.1in}
\begin {tabular}{c|c}		
\begin{minipage}{2.8in} 
\begin{algorithmic}[1]
\STATE {\bf Algorithm Chain}
\WHILE {$P_k \leq \|L_k\|-1, \forall k$}
\IF {$(t_j - t_i) < \tau_{min}$}
\STATE {$P_j \gets P_j+1$}
\ELSIF {$(t_j - t_i) \in [\tau_{min}, \tau_{max}]$}
\IF {$j = n$}
\STATE {$(P_1,\ldots,P_n)$ is the next match}
\STATE {$P_k \gets P_k+1, \forall k$}
\STATE {$i \gets 0; j \gets 1$}
\ELSE
\STATE {$i \gets j; j \gets j+1$}
\ENDIF
\ELSE
\STATE {$P_i \gets P_i+1$}
\STATE {$j \gets i; i \gets i-1$}
\ENDIF
\ENDWHILE
\end{algorithmic}
\end{minipage}
&
\begin{minipage}{2.8in}
\begin{algorithmic}[1]
\STATE {\bf Algorithm Sibling}
\WHILE {$P_k \leq \|L_k\|-1, \forall k$}
\IF {$(t_j - t_i) < -(k-1)\delta$}
\STATE {$P_j \gets P_j+1$}
\ELSIF {$(t_j - t_i) > (k-1)\delta, \forall i<j$}
\STATE {$P_i \gets P_i+1$}
\STATE {$j \gets i+1$}
\ELSE
\IF {$j = n$}
\STATE {$(P_1,\ldots,P_n)$ is the next match}
\STATE {$P_k \gets P_k+1, \forall k$}
\STATE {$i \gets 0; j \gets 1$}
\ELSE
\STATE {$j \gets j+1$}
\ENDIF
\ENDIF
\ENDWHILE
\end{algorithmic}
\end{minipage}
\\
\\
{(a)}&
{(b)}
\end{tabular}
\caption{Maximum matching algorithm for chains and ordered siblings (a); 
Maximum matching algorithm for unordered siblings (b). 
In the algorithms above, we initialize $i = 0; j = 1$ ($i,j$ are 
time list positions), and $P_1, \ldots, P_n = 0$ ($P_k$ 
is an index within $L_k$ ). Let $t_i = L_i[P_i]$ and $t_j = L_j[P_j]$.}
\label{fig:algs}
\end{center}
\end{figure}

The next theorem gives the correctness of the algorithms.

\begin{theorem} 
Algorithm-Chain and Algorithm-Sibling find maximum matchings.
\end{theorem}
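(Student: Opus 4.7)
The plan is an exchange argument combined with induction on the total length of the input time lists. I would treat the chain-triple case in detail first, then observe that the same template carries over to general chains, ordered siblings, and (with more care) unordered siblings. The causality requirement comes for free, because every pointer in both algorithms is advanced only forward, so the emitted tuples are automatically strictly increasing in time, and I would only need to establish maximality.

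For a chain triple with lists $L_1 = \{t_1 \le \cdots \le t_n\}$ and $L_2 = \{s_1 \le \cdots \le s_m\}$, let $(t_{i_1}, s_{j_1})$ be the first valid match produced by Algorithm-Chain. The key claim is that some maximum matching $M^{\ast}$ contains this pair. Pick $M^{\ast}$ to be lexicographically smallest on its earliest pair $(a,b)$. If $M^{\ast}$ uses $(t_a,s_b)\neq(t_{i_1},s_{j_1})$ as its earliest pair, then by the greedy pointer invariant --- the algorithm only advances past an entry of $L_1$ after verifying that no entry of $L_2$ at or beyond the current $L_2$-pointer lies within $[\tau_{min},\tau_{max}]$ of it --- we must have $i_1 \le a$ and, symmetrically, $j_1 \le b$. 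Replacing $(t_a,s_b)$ by $(t_{i_1},s_{j_1})$ then yields another matching of the same size that remains valid (the new pair is valid by construction and does not conflict with the rest of $M^{\ast}$ by the index inequalities and monotonicity), contradicting the lex-minimality of $M^{\ast}$. Induction on $n+m$ applied to the sublists $\{t_{i_1+1},\dots,t_n\}$ and $\{s_{j_1+1},\dots,s_m\}$ then closes the triple case.

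For a general chain $A_1\to\cdots\to A_\ell$ and for ordered siblings, essentially the same proof applies coordinate by coordinate: the algorithm emits the lexicographically smallest valid tuple of pointer positions, and in any optimal matching the earliest tuple can be replaced by this smallest one without violating any consecutive-difference constraint $[\tau_{min},\tau_{max}]$ or $[-\delta,\delta]$. The hardest case, and the one I expect to be the main obstacle, is the unordered sibling problem in Figure~\ref{fig:algs}(b), because the constraint $(t_i-t_j)\in [-(k-1)\delta,(k-1)\delta]$ is an all-pairs condition rather than a consecutive one. Here I would split the exchange into two steps: (i) whenever the algorithm advances the pointer that currently holds the extreme time because the diameter of the current tuple exceeds $(k-1)\delta$, argue that no valid $k$-tuple using only entries at or beyond the current pointer vector can involve that extreme entry, so the advance is safe; and (ii) once a valid $k$-tuple is emitted, show it is componentwise the smallest valid tuple available from the current pointer positions, so the lex-exchange used in the chain case still applies. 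The $(k-1)\delta$ diameter bound is essential in step (ii), since sliding any coordinate of $M^{\ast}$'s earliest tuple down to the greedy choice must leave the whole tuple inside a window of that same width; once this is established, induction on the remaining total list length finishes the argument.
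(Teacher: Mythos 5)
Your proposal is correct and follows essentially the same route as the paper: both arguments hinge on the key lemma that the algorithm's first emitted match is componentwise earliest among all valid matches (your lex-exchange into an optimal matching is the same device as the paper's contradiction constructing an earlier matching $m_f$), followed by induction on the remaining sublists. If anything, you are more explicit than the paper about the unordered-sibling case, whose all-pairs $(k-1)\delta$ constraint the paper's Lemma~2 glosses over by writing only the consecutive chain-style constraints.
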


\begin{proof}
By induction. Given a set of time lists $L=(L_1, L_2, \ldots, L_n)$ 
our algorithm produces a matching 
$M=(m_1, m_2, \ldots, m_k)$, where each matching $m_i$ is a sequence of 
$n$ times from each of the $n$ time 
lists $m_i=(t_1^{i}, t_2^{i}, \ldots, t_n^{i})$. 
Let $M^{*}=(m^{*}_1, m^{*}_2, \ldots, m^{*}_{k^{*}})$ be a maximum 
matching of size $k^*$. We prove that \math{k=k^*}
by induction on 
$k^*$. The next lemma follows directly from the construction of the 
Algorithms.

\begin{lemma}
\label{lemma1}
If there is a valid matching our algorithm will find one.
\end{lemma}

\begin{lemma}
\label{lemma2}
Algorithm-Chain and Algorithm-Sibling find an earliest valid matching. 
Let the first valid matching found by either algorithm 
be $m_1 = (t_1, t_2, \ldots, t_n)$, then for any other valid 
matching $m' = (s_1, s_2, \ldots, s_n)$ $t_i \leq s_i$ 
$\forall$ $i = 1,\cdots,n$.
\end{lemma}

\begin{proof}
Proof by contradiction. 
Assume that in $m_1$ and $m'$ there exists a corresponding pair of times $s < t$ 
and let $s_i$, $t_i$ be the first such pair. 
Since $m_1$ and $m'$ are valid matchings, then $s_i$ and $t_i$ obey the constraints: 
$ \tau_{min} \leq (t_{i+1} - t_i) \leq \tau_{max}$, 
$\tau_{min} \leq (t_i - t_{i-1}) \leq \tau_{max}$ 
and $\tau_{min} \leq (s_{i+1} - s_i) \leq \tau_{max}$, 
$\tau_{min} \leq (s_{i} - s_{i-1}) \leq \tau_{max}$.

Since $s_i < t_i$, then $\tau_{min} < (t_{i+1} - s_i)$ and $\tau_{max} > (s_i - t_{i-1})$. 
Also because $s_{i-1} \geq t_{i-1}$, we get that $ \tau_{min} \leq (s_i - t_{i-1})$ and 
since $(s_{i+1} - s_i) \leq \tau_{max}$, then $(min(t_{i+1},s_{i+1}) - s_i) \leq \tau_{max}$ 
as well. But if $s_i$ satisfies the above conditions, 
then $m_1$ would not be the first valid matching, 
because the first matching $m_f$ would contain 
$m_f = (t_1, t_2, \ldots , t_{i-1}, s_i, min(t_{i+1}, s_{i+1}), min(t_{i+2}, s_{i+2}), \ldots , min(t_{n}, s_{n}))$. 

Let us show this by induction on the 
number of pairs $p$ of the type $min(t_{i+j}, s_{i+j})$, 
where $s_i < t_i$ and $j \geq 1$. 

If $p = 1$, then $j = 1$, and since $\tau_{min} \leq (s_{i+1} - s_i) \leq \tau_{max}$ and 
$\tau_{min} < (t_{i+1} - s_i)$, then $\tau_{min} < (min(t_{i+1}, s_{i+1}) - s_i) \leq \tau_{max}$ 
as well, and therefore satisfies the matching constraints.

Let the matching constraints be satisfied up to $p = m$, such that in the matching 
$m^* = (t_1, t_2, \ldots , t_{i-1}, s_i, min(t_{i+1}, s_{i+1}), \ldots,  min(t_{i+m}, s_{i+m}), \ldots , min(t_{n}, s_{n}))$ 
the sequence of elements of $m^*$ up to $min(t_{i+m}, s_{i+m})$ satisfy the matching constraints. 
Then we can show that $min(t_{i+m+1}, s_{i+m+1})$ is also a part of the matching. Since 
$m_1$ and $m'$ are both valid matchings, then $ \tau_{min} \leq (t_{i+m+1} - t_{i+m}) \leq \tau_{max}$ 
and $\tau_{min} \leq (s_{i+m+1} - s_{i+m}) \leq \tau_{max}$, from which we get that 
$\tau_{min} \leq (min(t_{i+m+1}, s_{i+m+1}) - min(t_{i+m}, s_{i+m})) \leq \tau_{max}$. 
Therefore, $min(t_{i+m+1}, s_{i+m+1})$ is also a part of the matching.

Thus, we get a contradiction since $m_f$ would be an earlier matching 
if there exists a pair of times $s_i < t_i$. Therefore, 
Algorithm-Chain and Algorithm-Sibling find an earliest valid matching.
\end{proof}

If $k^* = 0$, then $k = 0$ as well. If $k^* = 1$, then 
there exists a valid matching and by Lemma \ref{lemma1} 
our algorithm will find it.

Suppose that for all sets of time lists for which $k^* = M$, 
the algorithm finds matchings of size $k^*$.
Now consider a set of time lists $L=(L_1, L_2, \ldots, L_n)$ 
for which an optimal algorithm produces a maximum matching of size 
$k^* = M+1$ and consider the first matching in this list 
(remember that by the causality constraint, the matchings can be ordered). 
Our algorithm constructs the earliest matching and then 
recursively processes the remaining lists. By Lemma
 \ref{lemma2}, our first matching 
is not later than optimal's first matching, so the partial 
lists remaining after our first matching contain the partial lists 
after optimal's first matching. This means that the optimal 
matching for our partial lists must be $M$. By the induction hypothesis our 
algorithm finds a matching of size $M$ on these partial lists 
for a total matching of size $M+1$.
\end{proof}

For a given set of time lists $L=(L_1, L_2, \ldots, L_n)$ as 
input, where each $L_i$ has a respective size $d_i$, define the 
total size of the data as $\|D\|=\sum_{i=1}^n{d_i}$.

\begin{theorem} 
Algorithm-Chain runs in $O(\|D\|)$ time.
\end{theorem}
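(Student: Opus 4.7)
The plan is to show that the main while loop performs $O(\|D\|)$ iterations and each iteration costs $O(1)$ work (treating the structure size $n$ as a constant), which gives the claimed running time. The tool will be amortized analysis against the natural potential $S = \sum_{k=1}^{n} P_k$, which is monotone non-decreasing because no branch of the algorithm ever decreases a pointer, and is bounded above by $\|D\|$ since $P_k$ is capped by $\|L_k\|$.

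First I would establish the invariant $j = i+1$ throughout the execution. It holds at initialization and is preserved by every branch: the ``too-small'' branch only touches $P_j$, the ``too-large'' branch updates $(i,j) \leftarrow (i-1, i)$, the extension branch updates $(i,j) \leftarrow (j, j+1)$, and the match branch resets $(i,j) \leftarrow (0,1)$. With $j$ riding one step behind $i$, I can classify each iteration into exactly four types: \textbf{(a)} too-small gap, which advances $P_j$; \textbf{(b)} valid gap with $j=n$, which records a match and advances all $P_k$; \textbf{(c)} valid gap with $j<n$, which increments $i$ without touching any $P_k$; and \textbf{(d)} too-large gap, which advances $P_i$ and decrements $i$. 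Types \textbf{(a)}, \textbf{(b)}, and \textbf{(d)} each strictly increase $S$ (by $1$, $n$, and $1$ respectively), so their combined counts are bounded by $\|D\|$, giving in particular $\mathrm{count}_a + \mathrm{count}_d \le \|D\|$ and $\mathrm{count}_b \le \|D\|/n$.

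The main obstacle is bounding $\mathrm{count}_c$, since type \textbf{(c)} iterations do not advance any pointer and are therefore invisible to $S$. To handle this I would introduce a second accounting based on $i$ itself: type \textbf{(c)} increments $i$ by $1$, type \textbf{(d)} decrements it by $1$, type \textbf{(b)} resets it from $n-1$ to $0$, and type \textbf{(a)} leaves it unchanged. Since $i$ starts at $0$ and always lies in $[0, n-1]$, the telescoping identity
\[
0 \le i_{\mathrm{final}} = \mathrm{count}_c - \mathrm{count}_d - (n-1)\,\mathrm{count}_b
\]
together with the previous bounds yields $\mathrm{count}_c \le \mathrm{count}_d + (n-1)\,\mathrm{count}_b + (n-1) \le \|D\| + (n-1)\|D\|/n + (n-1)$, which is $O(\|D\|)$ for fixed $n$. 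Summing the four counts and noting that every branch performs only constantly many arithmetic operations and pointer updates, the total running time is $O(\|D\|)$, as claimed.
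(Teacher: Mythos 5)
The paper states this theorem without giving any proof (the source simply leaves the proof blank), so there is nothing to compare against; your argument has to stand on its own, and it does. The invariant $j=i+1$ is correctly verified for all four branches, the potential $S=\sum_k P_k$ is indeed monotone (no branch ever decrements a pointer) and bounded by $\|D\|$, and the two-ledger accounting --- charging types (a), (b), (d) to $S$ and recovering $\mathrm{count}_c$ from the telescoping of $i$ over $[0,n-1]$ --- is exactly the missing ingredient, since the ``slide-forward'' iterations are invisible to $S$. Two small remarks. First, you do not actually need to treat $n$ as a constant: the only super-constant work per iteration is the $n$-fold pointer increment in the match branch (and the loop guard, which can be maintained incrementally), and your own bound $\mathrm{count}_b\le\|D\|/n$ cancels that factor, so the theorem holds as stated with $n$ growing. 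Second, your identity $0\le i_{\mathrm{final}}$ tacitly assumes $i$ never drops below $0$; this is true because the too-large branch can only fire when $(i,j)$ indexes a genuine pair of adjacent lists, but it is worth saying given that the paper's pseudocode is sloppy about the sentinel value $i=0$. Neither point is a gap; the proof is correct and complete.
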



\begin {theorem} 
Algorithm-Sibling runs in $O(n\cdot\|D\|)$ time.
\end{theorem}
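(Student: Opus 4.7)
My plan is to show that (i) the outer \textbf{while} loop of Algorithm-Sibling runs for at most $O(\|D\|)$ iterations, and (ii) each iteration does at most $O(n)$ work, so the total cost is $O(n\cdot\|D\|)$. The analysis is essentially an amortized/potential argument based on pointer advancement, in the spirit of two-pointer algorithms, but adapted to the fact that we have $n$ simultaneous pointers $P_1,\ldots,P_n$ rather than just two.

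First I would set up the potential $\Phi = \sum_{k=1}^{n} P_k$ and observe that $\Phi$ is nondecreasing throughout the execution and is bounded above by $\sum_{k=1}^n \|L_k\| = \|D\|$. I would then argue, by inspection of the three branches of the conditional, that every single iteration of the \textbf{while} loop strictly increases $\Phi$: the first branch advances exactly one pointer $P_j$; the second branch advances exactly one pointer $P_i$ (and only resets the auxiliary index $j\gets i+1$, which does not touch any $P_k$); and the third branch either advances $j$ (which also does not change $\Phi$, so this sub-case needs separate treatment) or, when $j=n$, records a match and advances all $n$ pointers. The ``only advance $j$'' sub-case does not increase $\Phi$, so I would handle it by noting that $j$ itself is strictly monotone between consecutive pointer advancements and is bounded above by $n$; hence between two successive events that increase $\Phi$ we can have at most $n$ such ``advance $j$'' iterations. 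This gives at most $O(n\cdot\|D\|)$ iterations, but a sharper accounting — charging these $O(n)$ ``advance $j$'' iterations to the $O(n)$ work we will already count per pointer advance — collapses this back and shows that the total number of iterations is $O(\|D\|)$ when each iteration is counted with its $O(n)$ cost.

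Next I would bound the per-iteration work. The branches that check $(t_j-t_i)<-(k-1)\delta$ and $(t_j-t_i)>(k-1)\delta\ \forall\,i<j$ require, in the worst case, scanning all indices $i<j$, which is $O(n)$. Every other step (pointer advancement, comparisons, the reset $j\gets i+1$, and recording a match, which is an $O(n)$-sized tuple to emit) is also $O(n)$ or less. Combining with the iteration bound gives the claimed $O(n\cdot\|D\|)$ running time.

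The main obstacle I expect is the proper bookkeeping for the ``advance $j$'' sub-case, because that step does not move any $P_k$ and so it does not directly contribute to the potential $\Phi$. The cleanest way around this is to define an extended potential $\Psi = n\cdot\Phi + j$: every iteration strictly increases $\Psi$ by at least one (since either some $P_k$ grows, contributing $n$, or $j$ grows by one while $\Phi$ is unchanged, or a match resets $j$ to $1$ but $\Phi$ jumps by $n$), and $\Psi\le n\cdot\|D\|+n$. Hence the total number of iterations is $O(n\cdot\|D\|)$, and since each iteration is $O(n)$ work naively but only $O(1)$ amortized in the ``advance $j$'' branch, a careful charge shows the total work is $O(n\cdot\|D\|)$. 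Correctness of the matches produced is inherited from the preceding theorem, so only the timing analysis needs to be verified here.
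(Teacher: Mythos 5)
The paper states this theorem without giving any proof, so there is nothing to compare your argument against; it has to stand on its own. Your potential-function argument for the \emph{iteration count} is sound: $\Psi = n\cdot\Phi + j$ with $\Phi=\sum_k P_k$ does increase by at least $1$ in every branch (branch~1 adds $n$; branch~2 adds $n$ and loses at most $n-1$ from the reset $j\gets i+1$; a match adds $n^2$ and loses $n-1$; the ``advance $j$'' case adds $1$), and $\Psi\le n\|D\|+n$, so the number of while-iterations is $O(n\cdot\|D\|)$. That part is correct and complete.

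The gap is in the work-per-iteration accounting. You concede that the test $(t_j-t_i)>(k-1)\delta\ \forall i<j$ costs $O(n)$ per iteration if done by scanning, and you have $O(n\cdot\|D\|)$ iterations; multiplied naively this gives $O(n^2\cdot\|D\|)$, and the ``careful charge'' you invoke to collapse it back to $O(n\cdot\|D\|)$ is asserted but never exhibited. It does not obviously exist: between two consecutive pointer advancements you can have up to $n$ ``advance $j$'' iterations, and if each one performs a $\Theta(j)$ scan (e.g.\ the violating index is found only at the end of the scan), a single epoch costs $\Theta(n^2)$ while advancing only one pointer, so charging scans to pointer advances fails. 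The clean repair is to observe that the universally quantified test reduces to a single comparison against $\max_{i<j} t_i$ (and the symmetric test against $\min_{i<j} t_i$), quantities maintainable in $O(1)$ per step; then every iteration costs $O(1)$ except for emitting a match, which costs $O(n)$ but occurs at most $\|D\|/n$ times since each match advances all $n$ pointers. With that observation the total running time is $O(n\cdot\|D\|)$ directly from your iteration bound. As written, though, the final step of your proof is a claim rather than an argument.
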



\subsection{Finding all Triples}

Assume the data are stored in a vector. Each component in the 
vector corresponds
to a sender id and stores a balanced 
search tree of receiver lists (indexed by a receiver id). 
And let $S$ be the whole set of distinct senders. The algorithm for 
finding chain triples considers sender 
id $s$ and its list of receivers $\{ r_1, r_2, \cdots, r_d \}$. Then for each 
such receiver $r_i$ that is also a sender, let $\{ \rho_1, \rho_2, 
\cdots, \rho_f \}$ be the receivers to which $r_i$ sent messages. 
All chains beginning with $s$ are of the form $s \rightarrow r_i \rightarrow 
\rho_j$. This way we can more efficiently 
enumerate the triples (since we ignore triples which do not occur). 
For each sender $s$ we count the frequency of 
each triple $s \rightarrow r_i \rightarrow \rho_j$. 

\begin {theorem} 
Algorithm to find all triple frequencies takes $O(\|D\| + n \cdot \|D\|)$ time.
\end{theorem}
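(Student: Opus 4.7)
The plan is to split the running time into two contributions and then combine them via a single double-counting argument. First I would bound the enumeration cost: the algorithm walks an outer loop over senders $s\in S$, a middle loop over receivers $r_i$ of $s$, and an inner loop over receivers $\rho_j$ of $r_i$. Each iteration can be charged against a message that actually occurs in the data (the $s\to r_i$ or $r_i\to\rho_j$ edge witnessing that branch of the enumeration), so the enumeration itself is $O(\|D\|)$, absorbing the logarithmic lookups in the balanced search trees that index the receivers.

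Next I would account for the cost of the frequency computations. For each triple $s\to r_i\to\rho_j$ enumerated above, we invoke Algorithm-Chain on the two time lists $L(s,r_i)$ and $L(r_i,\rho_j)$, which by the preceding theorem runs in $O(|L(s,r_i)|+|L(r_i,\rho_j)|)$ time. I would then reorganise the sum of these costs over all enumerated triples as a sum over individual time lists, weighted by the number of triples in which each list participates. The list $L(s,r_i)$ appears in exactly one triple per outgoing neighbour of $r_i$, a quantity bounded by $n$; similarly $L(r_i,\rho_j)$ appears in exactly one triple per incoming neighbour of $r_i$, again bounded by $n$. Hence the total cost of all frequency computations is at most
\[
n\cdot\sum_{(a,b)}|L(a,b)| \;=\; n\cdot\|D\|,
\]
and combining with the enumeration cost yields the advertised $O(\|D\|+n\cdot\|D\|)$ bound.

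The main obstacle I expect is not technical but notational: the theorem statement does not itself define $n$, so I would first have to fix the convention that $n$ denotes the number of distinct actors (equivalently, an upper bound on every node's in- and out-degree). Once that is pinned down, the factor $n$ in the multiplicity of each time list is immediate, correctness of the per-triple cost is inherited directly from the previous chain-matching theorem, and the remainder is a routine rearrangement of sums.
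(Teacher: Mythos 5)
Your argument is correct, and in fact the paper states this theorem without supplying any proof at all, so there is nothing in the source to diverge from; your double-counting of the per-triple matching cost over the time lists (each list $L(a,b)$ participating in at most $n$ triples, giving $n\cdot\sum_{(a,b)}|L(a,b)| = n\cdot\|D\|$) is the natural argument and matches the bound the authors assert. One small imprecision: your claim that the enumeration itself is $O(\|D\|)$ by charging each iteration to a witnessing edge does not quite work, since the edge $r_i\rightarrow\rho_j$ is revisited once for every in-neighbour $s$ of $r_i$, so the raw count of enumerated triples is $\sum_r \mathrm{indeg}(r)\cdot\mathrm{outdeg}(r)$, which is only bounded by $n\cdot\|D\|$ rather than $\|D\|$. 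This is harmless because that term is absorbed by your second bound, but the clean decomposition is $O(\|D\|)$ to read and index the data plus $O(n\cdot\|D\|)$ for the enumeration together with the linear-time matching calls. Your observation that $n$ must be pinned down as the number of actors (a degree bound) is also apt, since the paper leaves it undefined in this context.
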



\subsection{General Scoring Functions for $2D$-Matching}

One can observe that for our $2D$-matching we are using a so called 
``Step Function'', which returns $1$ for values between 
$[ \tau_{min}$,  $\tau_{max} ]$, and gives $0$ otherwise. 
Such a function represents the probability 
delay density which is the distribution of 
the time it takes to propagate a message once 
it is received.

\begin{figure}[h]
  \begin{center}
  \begin{tabular}{c c}
   \resizebox{7cm}{!} {\includegraphics[scale=1]{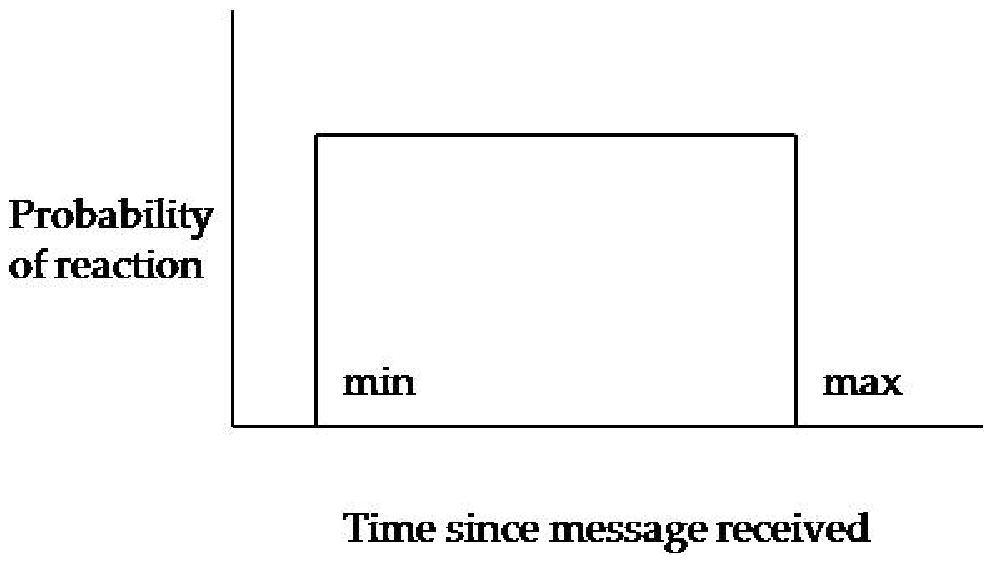}}
   &
   \resizebox{7cm}{!} {\includegraphics[scale=1]{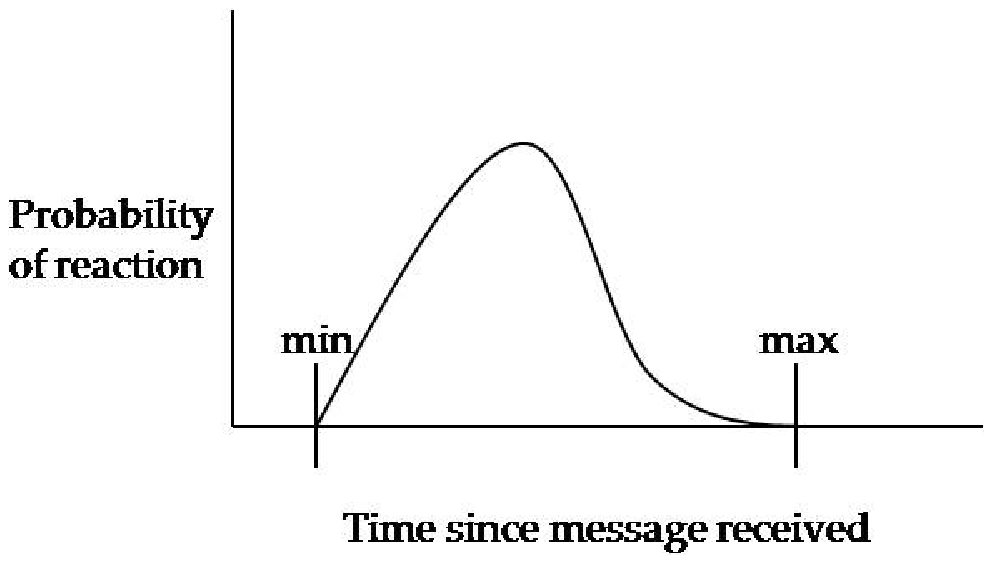}} 
   \\
   \end{tabular}
  \end{center}
  \caption{Step function on the left and a General Response Functions for 2D Matching on the right}
  \label{fig:funct}
\end{figure}

Here we extend our matching algorithm to be able to use any general 
propagation delay density function, see Figure \ref{fig:funct}. 

Usage of these various functions may uncover 
some additional information about the streaming groups and their 
structure which the ``Step Function'' missed. 

Unfortunately, the matching problem with an arbitrary function, 
unlike in the case with the ``Step Function'' which can be solved in linear time, 
cannot be solved so efficiently. 

First we provide an efficient algorithm to find a $2D$ maximum matching 
which satisfies a causality constraint (a maximum weight matching which 
has no intersecting edges). Additionally we will provide an approach 
involving the Hungarian algorithm to discover a maximum weighted $2D$-matching, 
which does not obey the causality constraint (edges involved in the maximum matching 
may intersect).

Given the two time lists $L_1 = \{ t_1, t_2, \ldots, t_n \}$ and 
$L_2 = \{ s_1, s_2, \ldots, s_m \}$ and a general scoring function $f(\cdot)$ over 
the specified time interval $[ \tau_{min, \tau_{max}}]$ we would like to 
find a maximum weighted $2d$ matching between these two time lists, such that 
the matching has no intersecting edges. No intersecting edges intuitively 
guaranties the causality constraint. 
To solve this problem we will employ the dynamic programming approach. 
Let $M_{i,j}$ be a maximum matching with the respective weight 
$w(M_{i,j})$, obeying the causality constraint, involving 
up to and including the $t_i$'th item of the list $L_1$ and up to and including 
the $s_j$'th item in the list $L_2$. Thus, the matching $M_{n,m}$ will hold the 
maximum weighted matching for the entire lists $L_1$ and $L_2$. When we 
compute the matching, we attempt to improve it 
from step to step by adding only the edges(matches) which do not intersect any of the 
edges already present in the matching. The following description of the algorithm 
will show why it is the case.

We will illustrate now that if we have correct solutions to 
subproblems $M_{i-1,j}$, $M_{i,j-1}$ and $M_{i-1,j-1}$, then we 
can construct a maximum matching $M_{i,j}$, which obeys the causality constraint 
by considering the following two simple cases: 

\begin{enumerate}

\item{Either the elements $t_i$ and $s_j$ are both matched to each other in the matching $M_{i,j}$, 
in which case $M_{i,j} = M_{i-1,j-1} \cup (t_i, s_j)$. Obviously the 
edge $(t_i, s_j)$ does not intersect any of the previous edges of $M_{i-1, j-1}$ so we maintain 
the causality constraint;}

\item{Or, the elements $t_i$ and $s_j$ are not matched to each other in the matching $M_{i,j}$. 
Then, one of the $t_i$ or $s_j$ is not matched (see Lemma~\ref{lemma_match}), 
which means that $M_{i,j} = max \{ M_{i,j-1}, M_{i-1,j} \}$. No edges are added to the matching 
in this case.}

\end{enumerate}

We initialize our algorithm by computing in linear time the base set of matches 
$\{ M_{1,1}, M_{1,2}, \ldots , M_{1, n} \}$ 
(the bottom row) and $\{ M_{1,1}, M_{2,1}, \ldots , M_{m, 1} \}$ (the left most column) 
of the two-dimensional array of subproblems (of size  $n \cdot m$) that is being built up. 
The matchings $\{ M_{1,1}, M_{1,2}, \ldots , M_{1, n} \}$ are constructed by taking the first 
element $s_1$ from the list $L_2$ and computing all of the weights of the edges $w(t_i, s_1)$, 
s.t. $w(M_{1,1}) = \{ f(s_1 - t_1) \}$ (contains edge $(t_1, s_1)$, if its not $0$), 
$w(M_{1,2}) = max \{ f(s_1 - t_1), f(s_1 - t_2) \}$ (contains the heavier of two edges $(t_1, s_1)$, 
$(t_2, s_1)$ ) up to $M_{1,n} = max \{ f(s_1 - t_1), f(s_1 - t_2), \ldots, f(s_1 - t_n) \}$ 
(contains the edge of maximum weight considered over all $t_i$'s). 
We similarly compute the set of matchings $\{ M_{1,1}, M_{2,1}, \ldots , M_{m, 1} \}$. 
Next we are ready to fill in the rest of the two-dimensional array of subproblems 
starting with $M_{2,2}$, since $M_{1,1}$, $M_{1,2}$ and $M_{2,1}$ are all available. 
The pseudo code of the algorithm is given in Figure~\ref{fig:algor10}. 

\begin {figure}
\centering
\hspace*{-0.1in}

\begin{algorithmic}[1]
\STATE {\bf Algorithm Match-Causality}
\STATE {Compute $\{ M_{1,1}, M_{1,2}, \ldots , M_{1, n} \}$ 
and $\{ M_{1,1}, M_{2,1}, \ldots , M_{m, 1} \}$}
\FOR   {$i = 2;$ $i \leq n;$ $i++$}
\FOR   {$j = 2;$ $j \leq m;$ $j++$}
\STATE {$M_{i,j} = max \{w(M_{i-1,j-1} \cup (t_i, s_j)), w(M_{i-1,j}), w(M_{i,j-1})  \}$}
\STATE {Store a direction for backtracking}
\ENDFOR
\ENDFOR

\STATE {Start at $M_{m, n}$ and backtrack to retrieve the edges of the matching}

\end{algorithmic}
\caption{Algorithm to discover a maximum weighted matching which obeys the causality constraint. 
In the algorithm above, we initialize $i = 0; j = 0$ ($i,j$ are 
time positions in lists  $L_1 = \{ t_1, t_2, \ldots, t_n \}$ and 
$L_2 = \{ s_1, s_2, \ldots, s_m \}$.}
\label{fig:algor10}
\end{figure}

\begin{lemma}
\label{lemma_match1}
The matching constructed by algorithm Match-Causality, obeys the causality constraint 
(contains no intersecting edges).
\end{lemma}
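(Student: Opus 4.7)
The plan is to prove the lemma by induction on the pair $(i,j)$ in the natural lexicographic order used by the algorithm to fill the table, showing that each entry $M_{i,j}$ produced by Match-Causality is itself a non-crossing matching. Recall that two edges $(t_a,s_b)$ and $(t_c,s_d)$ in a bipartite matching between $L_1$ and $L_2$ cross precisely when $a<c$ and $b>d$ (or symmetrically), so non-crossing is equivalent to the property that if we sort the matched pairs by their $t$-index, the $s$-indices are also sorted.

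For the base case I would handle the first row $\{M_{1,1},M_{1,2},\ldots,M_{1,n}\}$ and the first column $\{M_{1,1},M_{2,1},\ldots,M_{m,1}\}$ separately. These are computed by the initialization step, and by construction each of these matchings contains at most one edge (since only the single element $s_1$, respectively $t_1$, is available on one side), so the non-crossing property holds vacuously.

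For the inductive step, assume that $M_{i-1,j-1}$, $M_{i-1,j}$, and $M_{i,j-1}$ are each non-crossing, and consider the three candidates the algorithm compares. If $M_{i,j}=M_{i-1,j}$ or $M_{i,j}=M_{i,j-1}$, the non-crossing property is inherited immediately from the hypothesis. The only case requiring work is $M_{i,j}=M_{i-1,j-1}\cup\{(t_i,s_j)\}$. Here I would argue that every edge $(t_a,s_b)\in M_{i-1,j-1}$ satisfies $a\le i-1<i$ and $b\le j-1<j$, so relative to the new edge $(t_i,s_j)$ we have both endpoints strictly earlier on both sides. Hence $(t_a,s_b)$ and $(t_i,s_j)$ cannot cross. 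Combined with the inductive hypothesis that $M_{i-1,j-1}$ itself is non-crossing, this gives that $M_{i,j}$ is non-crossing.

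The main step that looks subtle, but in fact is forced by the structure of the DP, is guaranteeing that the new edge $(t_i,s_j)$ is never added on top of a matching that already uses $t_i$ or $s_j$; this is automatic because $M_{i-1,j-1}$ by definition uses only indices $\le i-1$ on the $L_1$ side and $\le j-1$ on the $L_2$ side, so the union is still a valid matching. Since the final output $M_{n,m}$ is obtained by backtracking through these same table entries, it inherits the non-crossing property, which is exactly the causality constraint stated in the lemma.
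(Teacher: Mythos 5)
Your proof is correct and takes essentially the same route as the paper: the key observation in both is that the only place an edge is ever added is $M_{i,j}=M_{i-1,j-1}\cup\{(t_i,s_j)\}$, and since $M_{i-1,j-1}$ only involves indices at most $i-1$ and $j-1$, the new edge cannot cross (or share an endpoint with) anything already present. Your write-up merely makes the induction on table entries and the base cases explicit, which the paper leaves informal.
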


\begin{proof}
By construction of our algorithm, during the computation of every 
$M_{i,j}$ a new edge is added to the matching only if 
the $(M_{i-1,j-1} \cup (t_i, s_j))$ is picked as maximum. But 
since $t_i$ and $s_j$ are the very last two elements for the matching $M_{i,j}$, they can't 
intersect any of the edges. Thus, since at each step our algorithm 
consistently adds edges which do not intersect any of the previously 
added edges, the final matching will contain no intersecting edges.
\end{proof}

\begin{lemma}
\label{lemma_match}
If the items $t_i$ and $s_j$ are not matched to each other in the matching $M_{i,j}$, 
then one of the $t_i$, $s_j$ is not matched at all.
\end{lemma}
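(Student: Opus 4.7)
The plan is to proceed by contradiction, leveraging the non-crossing (causality) property that was established for the matching $M_{i,j}$ in Lemma~\ref{lemma_match1}. Suppose toward contradiction that both $t_i$ and $s_j$ are matched in $M_{i,j}$, but not to each other. Since $M_{i,j}$ only involves elements from $\{t_1,\ldots,t_i\}$ and $\{s_1,\ldots,s_j\}$, the partner of $t_i$ must be some $s_k$ with $k<j$, and the partner of $s_j$ must be some $t_l$ with $l<i$.

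Next I would argue that the two edges $(t_i,s_k)$ and $(t_l,s_j)$ necessarily cross. Indexing both lists in increasing time order, two edges $(t_a,s_b)$ and $(t_c,s_d)$ are non-crossing exactly when $a<c \Leftrightarrow b<d$. In our situation $l<i$ while $k<j$, so the edge with the larger $t$-index has the smaller $s$-index, which is precisely the crossing configuration. This contradicts the non-crossing property of $M_{i,j}$ guaranteed by Lemma~\ref{lemma_match1}, so our assumption that both endpoints are matched (but not to each other) must be false, giving the claim.

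There is really no hard step here; the only thing to be careful about is the bookkeeping of which elements are available at stage $(i,j)$, namely that any partner of $t_i$ in $M_{i,j}$ lies in $\{s_1,\ldots,s_j\}$ and any partner of $s_j$ lies in $\{t_1,\ldots,t_i\}$, so the strict inequalities $k<j$ and $l<i$ follow immediately from the hypothesis that $t_i$ and $s_j$ are not matched to each other. Once this is set up, the geometric crossing argument is one line, and the lemma is done. This lemma then justifies the recurrence case $M_{i,j}=\max\{M_{i-1,j},M_{i,j-1}\}$ used in Algorithm Match-Causality, since whenever $(t_i,s_j)$ is not in the optimal matching, either $t_i$ or $s_j$ can be dropped without loss.
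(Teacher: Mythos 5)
Your proof is correct and follows essentially the same route as the paper's: assume both $t_i$ and $s_j$ are matched (but not to each other), observe that their partners must have strictly smaller indices in the opposite lists, and conclude that the two resulting edges cross, contradicting the non-crossing (causality) property of $M_{i,j}$. Your version is slightly more careful in spelling out the index bookkeeping and in explicitly invoking Lemma~\ref{lemma_match1}, but the argument is the same.
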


\begin{proof}
Let us assume for the sake of contradiction that both $t_i$ and $s_j$ are matched with 
some nodes. This automatically implies that $t_i$ must be matched with some $s_{j'}$, 
which appears before the $s_j$ in the list $L_2$; and $s_j$ is matched with some $t_{i'}$, 
which occurs before the $t_i$ in the list $L_1$. But this means that the edges $(t_i, s_{j'})$ 
and $(t_{i'}, s_{j})$ intersect, a contradiction.
\end{proof}

\begin{theorem}
Algorithm Match-Causality correctly finds a maximum weighted matching.
\end{theorem}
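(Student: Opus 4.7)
The plan is to prove correctness by strong induction on $i+j$, using the two lemmas already established. Lemma \ref{lemma_match1} shows the matching produced by Match-Causality obeys the causality constraint, so it remains to show the weight is maximal. Let $M^{*}_{i,j}$ denote an optimal (maximum weighted, causality-respecting) matching using only times $t_1,\dots,t_i$ from $L_1$ and $s_1,\dots,s_j$ from $L_2$. The goal is to establish that $w(M_{i,j}) = w(M^{*}_{i,j})$ for all $i,j$.

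The base cases $i=1$ or $j=1$ are handled directly: as described in the algorithm, the first row and first column are computed by exhaustively considering every candidate edge incident to $s_1$ (resp.\ $t_1$), and since only one time is available on one side, any feasible matching consists of a single edge whose maximum weight is precisely what the algorithm stores. The inductive hypothesis is that $M_{i',j'}$ is optimal whenever $i'+j' < i+j$ (and $i',j' \geq 1$).

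For the inductive step, I would split on the status of $t_i$ and $s_j$ inside $M^{*}_{i,j}$. By Lemma~\ref{lemma_match} applied to $M^{*}_{i,j}$, one of three mutually exhaustive cases holds. (Case~A) The edge $(t_i,s_j)$ belongs to $M^{*}_{i,j}$; then $M^{*}_{i,j}\setminus\{(t_i,s_j)\}$ is a causality-respecting matching on $\{t_1,\dots,t_{i-1}\}\times\{s_1,\dots,s_{j-1}\}$, so by induction its weight is at most $w(M_{i-1,j-1})$, giving $w(M^{*}_{i,j}) \leq w(M_{i-1,j-1}) + f(s_j-t_i)$. (Case~B) $t_i$ is unmatched in $M^{*}_{i,j}$; then $M^{*}_{i,j}$ is a valid matching on $\{t_1,\dots,t_{i-1}\}\times\{s_1,\dots,s_j\}$, so $w(M^{*}_{i,j}) \leq w(M_{i-1,j})$ by induction. (Case~C) $s_j$ is unmatched; symmetrically, $w(M^{*}_{i,j}) \leq w(M_{i,j-1})$. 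In every case $w(M^{*}_{i,j})$ is bounded by one of the three quantities the algorithm takes the maximum of, so $w(M^{*}_{i,j}) \leq w(M_{i,j})$. The reverse inequality $w(M_{i,j}) \leq w(M^{*}_{i,j})$ is immediate because the algorithm only combines feasible causality-respecting matchings (by Lemma~\ref{lemma_match1} together with the fact that appending $(t_i,s_j)$ to $M_{i-1,j-1}$ introduces no crossings, since $t_i$ and $s_j$ are the latest times on each side).

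The main subtlety — and the step I would be most careful about — is the appeal to Lemma~\ref{lemma_match} within Case~A, since that lemma was stated for matchings that do \emph{not} match $t_i$ to $s_j$; here I use its contrapositive/trichotomy to partition $M^{*}_{i,j}$ into the three cases above, and I must verify that the residual matchings obtained after deleting $(t_i,s_j)$ (or after simply restricting the index range) remain causality-respecting and hence are feasible competitors for the corresponding subproblem. Once this is in place, the induction closes and the backtracking step at $M_{m,n}$ yields an optimal matching for the whole lists $L_1,L_2$. \qedsymb
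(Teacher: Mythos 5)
Your proof is correct and follows essentially the same route as the paper: induction over the DP table, with the three-way case split on whether $(t_i,s_j)$ is an edge of the optimum and, if not, Lemma~\ref{lemma_match} forcing one of $t_i,s_j$ to be unmatched, so the optimum is dominated by one of the three candidates the algorithm maximizes over. If anything, your write-up is more complete than the paper's, which asserts optimality of the max over the three subproblems without explicitly invoking Lemma~\ref{lemma_match} inside the proof or verifying that the residual matchings are feasible competitors for the subproblems.
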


\begin{proof}
Proof by induction. For the base case lets consider the case where $\|L_1\| = 1$ and 
$\|L_2\| = 1$, in this case the algorithm will trivially match $t_1$ (the only element of $L_1$) 
with $s_1$ (the only element of $L_2$) as long as the $f(s_1 - t_1) > 0$, otherwise the 
matching would be empty.

For the inductive step we assume that if our algorithm finds all of the maximum weighted matchings 
,which obey the causality constraint, 
correctly up to and including $M_{i,j-1}$, then the 
algorithm correctly finds the maximum matching which obeys the causality constraint 
for $M_{i, j}$ (the very next position it considers after $M_{i, j-1}$). 
By our assumption we know that our algorithm correctly found the matchings $M_{i,j-1}$, $M_{i-1,j-1}$ 
and $M_{i-1,j}$, which all obey the causality constraint, 
since all of them occurred before the computation of $M_{i,j}$. 
If so, then our algorithm by construction 
will pick the maximum weight matching 
from the set of $3$ possible matchings 
$\{(M_{i-1,j-1} \cup (t_i, s_j)), M_{i-1,j}, M_{i,j-1}  \}$, 
which guaranties the $M_{i, j}$ to be maximum weight and obey the causality 
constraint.
\end{proof}

\begin{theorem}
Algorithm Match-Causality runs in $O(n \cdot m)$ time.
\end{theorem}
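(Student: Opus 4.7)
The plan is to bound the work of Algorithm Match-Causality by accounting separately for its three phases: initialization, the double loop, and backtracking. I would first argue that only the scalar weight $w(M_{i,j})$ and a single backtracking pointer need to be stored at each cell, rather than the matching itself; this is what makes each cell update constant time.

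First, I would handle initialization. The first row $\{M_{1,1}, M_{1,2}, \ldots, M_{1,n}\}$ is computed by scanning $L_1$ once and maintaining a running maximum of $f(s_1 - t_i)$, which takes $O(n)$. Symmetrically, the first column is built in $O(m)$. So the base cases cost $O(n+m)$.

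Next, I would analyze the main double loop. The outer and inner loops together iterate over the $(n-1)(m-1)$ interior cells. For each cell $M_{i,j}$, the algorithm computes $w(M_{i-1,j-1}) + f(s_j - t_i)$ (a single table lookup, one evaluation of $f$, and one addition), then takes the maximum of three previously stored weights, and finally stores a backtracking direction. Each of these operations is $O(1)$ under the standard assumption that $f$ evaluates in constant time and arithmetic is $O(1)$. Hence the loop contributes $O(nm)$.

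Finally, I would note that backtracking starts at $M_{m,n}$ and follows stored directions, each of which decreases $i+j$ by at least one, so the reconstruction terminates in $O(n+m)$ steps. Adding the three phases yields a total running time of $O(n+m) + O(nm) + O(n+m) = O(nm)$. The only mild obstacle is being explicit that the matching need not be materialized inside each cell (only its weight and a pointer), since otherwise naive copying could inflate the per-cell cost to $O(n+m)$ and blow the bound up to $O(nm(n+m))$; everything else is a direct accounting of the nested loops.
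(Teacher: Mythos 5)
Your proof is correct; the paper actually states this theorem without providing any proof, so there is nothing to compare against, but your accounting (constant-time initialization per entry, $O(1)$ work per interior cell, $O(n+m)$ backtracking) is exactly the argument the authors evidently intended. Your explicit observation that each cell must store only the scalar weight $w(M_{i,j})$ and a backtracking direction---rather than the matching itself, as the pseudocode's notation $M_{i,j} = \max\{w(M_{i-1,j-1}\cup(t_i,s_j)),\ldots\}$ might naively suggest---is a genuine and worthwhile clarification, since a literal implementation that copies matchings into cells would not achieve the claimed bound.
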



The general propagation delay function $f(\cdot)$ can have any shape, and 
one can wonder if it is possible to find an algorithm 
which will perform faster then $O(n \cdot m)$ for some special case of 
the general propagation delay function. Let us consider one of the 
most intuitive scenarios where the propagation delay function 
is monotonically decreasing. 
We prove that there does not exist an algorithm which can construct the 
maximum weight matching in less then $O(n \cdot m)$ time, which obeys the 
causality constraint.

\begin{theorem}
Algorithm which finds exactly the maximum weight matching 
for a propagation delay function which is 
strictly monotonically decreasing (not a ``step'' function) 
and obeys the causality constraint, 
requires at least $O(n \cdot m)$ time.
\end{theorem}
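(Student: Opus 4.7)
I would establish the $\Omega(n\cdot m)$ lower bound by an adversary argument in a natural evaluation model: any correct algorithm must, in the worst case, evaluate the propagation delay function $f$ at $\Omega(nm)$ distinct pairs $(t_i,s_j)$. The intuition is that for a strictly monotonically decreasing $f$ every candidate edge has its own distinct weight, so the optimal non-crossing matching is a global combinatorial choice that cannot be shortcut as in the step-function case.

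First, I would fix an adversarial instance geometry. Choose $|L_1|=n$, $|L_2|=m$, and position the times so that all $nm$ differences $s_j-t_i$ lie strictly inside $[\tau_{\min},\tau_{\max}]$ and are pairwise distinct. Every one of the $nm$ potential edges therefore has a positive, well-defined weight $f(s_j-t_i)$, and the problem reduces to picking a maximum weight non-crossing matching in a full $n\times m$ weight matrix $W_{ij}=f(s_j-t_i)$. This kills any structural shortcut based on which pairs are ``valid.''

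Next, I would set up the adversary. Treat $f$ as accessed only through point evaluations at the differences corresponding to the pairs the algorithm inspects (any comparison-based access can be absorbed into this model). The adversary maintains a partial, strictly decreasing specification of $f$: each previously answered query is recorded, and new answers are chosen to respect strict monotonicity relative to all prior answers while leaving as much slack as possible. When the algorithm halts and outputs a matching $M$, the adversary then commits to a completion of $f$.

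The crux of the argument is a ``swing pair'' step. Suppose the algorithm performs fewer than $(n-1)(m-1)$ evaluations. Then some pair $(t_i,s_j)$ is never inspected, and, by a pigeonhole choice over the sorted differences, its difference $s_j-t_i$ can be placed in a gap between queried differences where the partial monotonicity constraints leave a wide interval of feasible values for $f(s_j-t_i)$. I would then show that the adversary can complete $f$ in two ways, both strictly decreasing and consistent with all answered queries, such that in one completion the unique maximum non-crossing matching contains edge $(i,j)$ and in the other it does not. Since $M$ is fixed before the completion is revealed, $M$ is incorrect for at least one of the two extensions, contradicting correctness and forcing the claimed query lower bound, hence the $\Omega(nm)$ time bound.

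The main obstacle is precisely the swing step: proving that varying a single unqueried weight over a large interval strictly flips the optimum. I would handle this by exploiting the non-crossing structure to decompose the matching into three independent pieces — the part strictly above and to the left of $(i,j)$, the part strictly below and to the right, and the possible use of $(i,j)$ itself — and by pre-shaping the queried weights along the ``border rows and columns'' of cell $(i,j)$ so that the best matching avoiding $(i,j)$ and the best matching through $(i,j)$ (excluding $W_{ij}$) are within a controlled gap. Then sweeping $W_{ij}$ from near $0$ to a value larger than that gap provably changes whether $(i,j)$ appears in the unique optimum, which is exactly the ``swing'' the adversary needs. The rest of the argument is accounting: the required shaping of the queried weights can itself be done consistently with strict monotonicity by choosing the corresponding differences in well-separated monotone positions on the real line.
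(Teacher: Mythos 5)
Your overall strategy is the same one the paper uses, just pushed further: both arguments are indistinguishability arguments over the class of all strictly decreasing delay functions, claiming that an algorithm that has not examined enough of $f$ cannot certify which of the $\Omega(nm)$ candidate non-crossing matchings is optimal. The paper's version is quite informal -- it fixes an instance with every $s_j>t_n$, observes that $t_n$ must be matched, lists the family of staircase matchings, and simply asserts that one ``can't guaranty optimality'' without considering all of them. Your query-model adversary is the right way to make that assertion precise, so in that sense your plan is more honest about what actually has to be proved.

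That said, the crux you flag -- the swing step -- is a genuine gap, and as written your fix for it does not work. You propose to ``pre-shape the queried weights along the border rows and columns of cell $(i,j)$,'' but the adversary cannot do this: it answers queries online and only learns which cell $(i,j)$ went unqueried after the algorithm halts, so it must answer \emph{every} query in a way that simultaneously preserves a swing interval for \emph{every} still-unqueried cell. Moreover, the feasible interval for an unqueried value $f(s_j-t_i)$ is pinned between the answers already given at the nearest queried differences on either side, so you must argue that this interval always straddles the critical threshold at which $(i,j)$ enters or leaves the unique optimum; this needs an explicit gap-leaving strategy (e.g., answers drawn from a set with super-exponentially growing ratios) plus the non-crossing decomposition you mention, and neither is carried out. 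A second complication you gloss over: the weight matrix is not an arbitrary $n\times m$ matrix. Its entries are $f(s_j-t_i)$, so it has only $n+m$ degrees of freedom in the differences and is monotone along rows and columns; your claim that the adversary faces a ``full $n\times m$ weight matrix'' with independent entries is therefore too strong, and the swing argument must be shown to survive this coupling. Finally, note that the bound can only hold in a black-box model for $f$ (for a specific closed-form $f$, e.g.\ one making the matrix Monge, subquadratic algorithms are not ruled out); you state this model, which is good, but it is worth saying explicitly that the theorem is a worst-case statement over the function class, since the paper leaves this implicit as well.
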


\begin{proof}
Consider the two time lists 
$L_1 = \{ t_1, t_2, \ldots, t_n \}$,  
$L_2 = \{ s_1, s_2, \ldots, s_m \}$, where 
every time $s_j > t_n$, and a strictly 
monotonically decreasing function $f(\cdot)$, s.t. 
$f(s_m - t_1) > 0$. 
The first observation to make is that 
$t_n$ must be a part of the matching. 
If $t_{n'}$ is the last matched item  
and $t_{n}$ is not matched, where $n' < n$, 
then the matching can be improved by 
replacing $t_{n'}$ with $t_n$, since $f(\cdot)$ 
is a strictly monotonically decreasing function 
and $n' < n$. 

If the matching obeys the causality constraint, 
then the maximum weight matching can be 
$\{ f(s_1 - t_n) \}$ or $\{ f(s_2 - t_n) + f(s_1 - t_{n-1}) \}$ or  
$\ldots$ or $\{ f(s_1 - t_1) + f(s_2 - t_2) + \ldots + f(s_m - t_n) \}$, 
order of $O(n \cdot m)$ combinations. 
And since the function is any strictly monotonically decreasing function, 
one can't guaranty the optimality of the discovered matching without 
having to consider all of the mentioned $O(n \cdot m)$ permutations. 
Thus an algorithm which finds exactly the maximum weight matching 
for a propagation delay function which is 
strictly monotonically decreasing (not a ``step'' function) 
and obeys the causality constraint, 
requires at least $O(n \cdot m)$ time.
\end{proof}

Additionally we present a method to discover a maximum weight matching 
for a general propagation delay function, which doesn't have to obey the 
causality constraint (we allow the intersection of edges in the matching).
The general idea is to use a Hungarian algorithm to 
find a maximum weighted $2d$-matching for a pair of time lists.
  
First, given two time lists $L_1 = \{ t_1, t_2, \ldots, t_n \}$ and 
$L_2 = \{ s_1, s_2, \ldots, s_m \}$ and a general scoring function $f(\cdot)$ over 
the specified time interval $[ \tau_{min, \tau_{max}}]$, we construct the 
bipartite graph, where on the left we have 
the set of $n$ nodes, where each node represents a respective time from 
$\{ t_1, t_2, \ldots, t_n \}$ and on the right we have a set of $m$ nodes 
representing each of $\{ s_1, s_2, \ldots, s_m \}$ times respectively. Each 
pair of nodes $t_i$ and $s_j$ is connected by an edge, where the weight 
on the edge equals to $f(s_j - t_i)$ ($0$ if outside 
the $[ \tau_{min, \tau_{max}}]$ bounds). 

Once we have constructed the bipartite graph we are ready to run the Hungarian 
algorithm. The produced matching $M$ is of maximum weight, but does not take into 
account the causality constraint (some of the edges of $M$ may intersect). This 
algorithm runs in cubic time.

We use ENRON data to test general propagation delay functions against the ``step'' 
function. The results of our experiments are 
presented in Section~\ref{sec:Experimental Results and Conclusions}. 
It turns out that in most of the cases there is not much added 
value from the more general propagation delay function 
in practice. Thus, the more efficient function seems adequate.

\section{Statistically Significant Triples}
\label{sec:Statistically Significant Triples}

We determine the minimum frequency $\kappa$ that makes a triple 
statistically significant, 
using  a statistical model that mimics certain features of the data:
we model the 
inter-arrival time distribution and receiver id probability 
conditioned on sender id, 
to  generate synthetic data and find all 
randomly occurring triples to determine the threshold frequency $\kappa$.

\subsection{A Model for the Data}

We estimate directly from the data the message inter-arrival time 
distribution $f(\tau)$, the conditional probability distribution 
$P(r|s)$, and the marginal distribution \math{P(s)}
using simple histograms (one for $f(\tau)$, $S$ for $P(r|s)$ and
\math{S} for 
$P(s)$, i.e. one conditional and marginal distribution 
histogram for each sender, where \math{S} is the number of senders). 
One may also model additional features 
(e.g. $P(s|r)$), to obtain more accurate models. One 
should however bear in mind that the more accurate the model, the 
closer the random data is to the actual data, hence the less useful 
the statistical analysis will be - it will simply reproduce the data.

\subsection{Synthetic Data}
 Suppose one wishes to generate $N$ messages using $f(\tau)$, 
$P(r|s)$ and $P(s)$. First we generate $N$ inter-arrival 
times independently, which specifies the times of the communications. 
We now must assign 
sender-receiver pairs to each communication. The senders are selected 
independently from $P(s)$. 
We then generate each receiver independently, but conditioned on the
sender of that communication, according to $P(r|s)$.

\subsection{Determining the Significance Threshold}
To determine the significance threshold $\kappa$, 
we generate $M$ (as large as possible) synthetic data sets and 
determine the 
triples together with their frequencies of occurrence in each 
synthetic data set. The threshold $\kappa$ may be selected as 
the average plus two standard deviations, or (more conservatively) 
as the maximum frequency of occurrence of a triple.

\section{Constructing Larger Graphs using Heuristics}
\label{section:heuristics}

Now we discuss a heuristic method for building larger communication 
structures, using only statistically significant triples.  
We will start by introducing the notion of an overlap factor. 
We will then discuss how the overlap factor is used to build a larger 
communication graph by finding clusters, and construct the larger 
communication structures from these clusters.

\subsection{Overlap between Triples}

For two  statistically significant triples $(A,B,C)$ and $(D,E,F)$ 
(chain or sibling) with 
maximum matchings at the times $M_1 = \{ (t_1,s_1), \ldots, (t_k,s_k)\}$ 
and $M_2 = \{ (t'_1,s'_1), \ldots, (t'_p,s'_p)\}$,
we use an overlap weighting function $W(M_1,M_2)$ to capture the 
degree of coincidence between the matchings $M_1$ and $M_2$. 
The simplest such overlap weighting function is the extent to which 
the two time intervals of communication overlap. Specifically, 
$W(M_1, M_2)$ is the percent overlap between the two intervals 
$[t_1,s_k]$ and $[t'_1, s'_p]$: 
$$
W(M_1,M_2) = \max \left\{ \frac{\min(s_k,s'_p) - 
\max(t_1,t'_1)}{\max(s_k,s'_p) - \min(t_1,t'_1)}, 0 \right \}
$$
A large \emph{overlap factor} suggests that both 
triples are part of the same hidden group. More 
sophisticated overlap factors could take into account intermittent 
communication but for our present purpose, we will
use this simplest version.

\subsection{The Weighted Overlap Graph and Clustering}

We construct a weighted graph by taking all significant triples to 
be the vertices in the graph. Let $M_i$ be the maximum matching 
corresponding to vertex (triple) $v_i$. 
We define the weight of 
the edge $e_{ij}$ to be $\omega(e_{ij}) = W(M_i,M_j)$, producing 
an undirected complete graph (some weights may be 0). 
By thresholding the weights, one could obtain a sparse graph. 
Dense subgraphs 
correspond to triples that were all 
active at about the same time, and are a candidate hidden
group. 
We want to cluster the graph into dense possibly overlapping 
subgraphs. Given the triples in a cluster we can build a directed 
graph, consistent with all the triples,
 to represent its communication structure. Cluster containing 
multiple connected components implies the existence 
of some hidden structure connecting them. Below is an outline of the 
entire algorithm:
\begin{center}
\begin{algorithmic}[1]
\STATE{Obtain the significant triples.}
\STATE{Construct a weighted overlap 
graph (weights are overlap factors between pairs of triples).}
\STATE{Perform clustering on the weighted graph.}
\STATE{Use each cluster to determine a candidate hidden group structure.}
\end{algorithmic}
\end{center}
For the clustering, since clusters may overlap, 
we use the algorithms presented in \cite{cluster2005a, cluster2005b}.
\section{Algorithm for Querying Tree Hidden Groups}

We describe efficient algorithms for computing (exactly)
the frequency
of a hidden group whose communication structure is an arbitrary 
pre-specified
tree. 
We assume that messages initiate from the root. 
The parameters \math{\tau_{min},\tau_{max},\delta}
are also specified. Such an algorithm
can be used in conjunction with the previous heuristic algorithms to 
verify that a discovered tree-like structure actually occurs frequently in 
the data.

Let $L$ be an adjacency list for the  tree \math{T}, $D$ 
a dataset in which we will 
query this tree. 
The first entry in the list $L$ is the {\it root communicator} followed 
by the list of 
all its {\emph children} (receivers) the {\it root} sends to. 
The next entries in $L$ contain the 
lists of children for each of the receivers of $L_{root}$ 
 until we reach the leaves, which have no children.
 \begin{figure}[p]
\begin{center}

\begin{algorithmic}[1]
\STATE{\bf Algorithm Tree-Mine($T$,$D$)}
\STATE{$D_{rem} \gets D$}
\WHILE{$M=TRUE$}
\STATE{($M,t') = FindNext(T,D_{rem}$)}
\IF{$M$}
\STATE{Store Match}
\STATE{Increment List Pointers; get $D_{rem}$}
\ENDIF
\ENDWHILE
\end{algorithmic}

\begin{algorithmic}
 \STATE{  }
 \STATE{  }
  \STATE{  }
 \end{algorithmic}

\begin{algorithmic}[1]
\STATE{\bf Algorithm $FindNext(T,D_{rem}$)}
\STATE{Initialize all $truth_j \gets 0$}
\STATE{return $Findnext_{rec}(NULL,root)$}
\end{algorithmic}

\begin{algorithmic}
 \STATE{  }
 \STATE{  }
  \STATE{  }
 \end{algorithmic}

\begin{algorithmic}[1]
\STATE{\bf Algorithm $FindNext_{rec}(t,\ast node$ $i)$}
\STATE{$(\star)$Run Algorithm-Sibling from current
time list pointers to get $m = (t_1,\ldots,t_n)$}
\IF{$m \sim t$}
\FOR{$j = n$ to 1}
\IF{($truth_j=1$ $\&$ $prev_j<t_j$) or $truth_j=0$}
\STATE{$(truth_j,prev_j)=FindNext_{rec}(t_j,\ast node$ $j)$}
\IF{$truth_j=0$}
\STATE{Increase $t_j$ pointer, GOTO$(\star)$}
\ENDIF
\ELSE
\STATE{return ($truth_j$,t)}
\ENDIF
\ENDFOR
\ENDIF
\IF{$m < t$}
\STATE{Increase $t_j$ pointer, GOTO$(\star)$}
\ENDIF
\IF{$m > t$}
\STATE{return ($0$, t)}
\ENDIF
\end{algorithmic}
\caption{Algorithms used for Querying a Tree $T$ in the data $D$. 
In the algorithms above, $D_{rem}$ represents $D$ in an way that 
allows the Tree-Mine Algorithm to efficiently access the necessary data.}
\label{fig:tree_mining}
\end{center}
\end{figure}

After we have read in $D$, 
we process $L$ and use it to construct the tree, 
in which every node will contain:
\emph{node id}, \emph{time list} when its parent  
sent messages to it, and a \emph{list of children}. 
We construct such tree by processing $L$ 
and checking each communicator that has children if it is present in 
$D$ as a $Sender$, 
and if its children are present in $D$ in the list of its $Receivers$.
During the construction, if a node that has children is 
not present in $D$ as a $Sender$, or some child is not on the list 
of $Receivers$ of its parent, then we know that the given tree does not 
exist in the current data set $D$ and we can stop our search.

For a tree to exist there should be at least one matching involving 
all of the nodes (lists). We start with $root$ and consider the 
time lists of its children. We use Algorithm-Sibling to find the first matching 
$m_1=(t_1, t_2, \ldots, t_n)$, where $t_i$ 
is an element of the $i$'s child time list and 
$n$ is the number of time lists. After the first matching $m_1$ 
we proceed by considering the children 
in the matching $m_1$ from the rightmost child to the 
left by taking the value $t_i$, 
which represents this node in the matching
and passing it down to the child node. 
Next we try to find a matching $m_2 = (s_1, s_2, \ldots, s_k)$ 
for the \math{k} child time lists. 
There are three cases to consider:

\begin{enumerate}

\item{ Every element $s_j$ of the matching $m_2$ also satisfies 
the chain-constraint with the element $t_i$: 
$\tau_{min} \leq s_j - t_i \leq \tau_{max}$, 
$\forall s_j \in m_2$, $j  = k, \ldots, 1$. 
In this case we say $m_2 \sim t_i$ ($m_2$ matches $t_i$) 
and proceed by considering all children. 
Otherwise consider the rightmost $s_j \in m_2$. The two cases below refer 
to \math{s_j}.}

\item{ If $s_j < t_i + \tau_{min}$, in which case we say $m_2 < t_i$,
we  advance to the next 
element in child $j$'s time list and continue 
as with Algorithm-Sibling 
to find the next matching $(s'_1, s'_2, \ldots, s'_k)$. 
This process is repeated as long as $m_2 < t_i$. Eventually we will find 
an $m_2$ with 
$m_2 \sim t_i$ or we will reach the end of some list (in which case 
there is no further 
matching) or we come to a matching $m_2 > t_i$ (see case below).}

\item{ If $s_j > t_i + \tau_{max}$, in which case we say $m_2 > t_i$,
we advance $t_i$ to the next element in $i$'s time list on the previous 
level and proceed as with Algorithm-Sibling to find the next matching in the
previous level. 
After this new matching $(t'_1, t'_2, \ldots, t'_n)$ is found, the 
chain 
constraints have to be checked for these time 
lists $(t'_1, t'_2, \ldots, t'_n)$ 
with their 
 previous level and the algorithm proceeds recursively from then on.}

\end{enumerate}

 \begin{figure}[htpb]
  \begin{center}
   \resizebox{5cm}{!} {\input{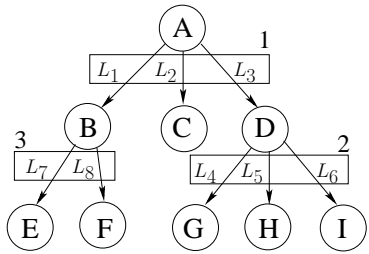}}
  \end{center}
  \caption{\small Example of a communication tree structure}
  \label{fig:mining_example}
\end{figure}

The entire algorithm for finding a complete matching 
can be formulated
into two steps: find the first matching; recursively process the remaining 
parts of the time lists. What we have described is the 
first step which is accomplished 
by calling the recursive algorithm $FindNext_{rec}(NULL,root)$ 
that is summarized in the Figure \ref{fig:tree_mining}. 
If this returns $TRUE$,
 the algorithm has found the 
first occurrence of the tree, 
which can be read off from
 the current time list pointers.
After this instance is found, we store it and proceed by considering  
the remaining part of the time lists starting from the $root$.

To illustrate how the Algorithm Tree-mine works, consider 
the example tree $T$ in Figure \ref{fig:mining_example}. Let 
node $A$ to be a $root$ and let $L_1, \ldots, L_8$ be the time lists. 
Refer to $(L_1, L_2, L_3)$ as the $phase_1$ lists,
 $(L_4, L_5, L_6)$ 
as the $phase_2$ lists and  $(L_7, L_8)$ as the $phase_3$ lists. 
Let $m_1 = (t_1, t_2, t_3)$, $m_2 = (s_1, s_2, s_3)$ and 
$m_3 = (r_1, r_2)$ 
be the first matchings of the $phase_1, phase_2$ and $phase_3$ lists
 respectively. 
If $m_2 \sim t_3$ and $m_3 \sim t_1$, we have found the first 
matching and we now 
recursively process the remaining time lists. If 
$m_2 < t_3$ (eg. $s_2 < t_3 + \tau_{min}$), 
then we move to the next matching in the $phase_2$ lists. 
If $m_2 > t_3$ then we move to the next matching in 
$phase_1$ lists and reconsider the 
$phase_2$ matching and the $phase_3$ matching if necessary. If $m_2 \sim t_3$ 
we then similarly check $m_3$ with $t_1$. Since node 
$C$ is a leaf, it need not be further processed.

\begin{theorem}
Algorithm Tree-Mine correctly finds the maximum number of occurrences for a specified 
tree $T$.
\end{theorem}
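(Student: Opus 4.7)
The plan is to mirror the induction strategy of Theorem 1 (for Algorithm-Chain and Algorithm-Sibling), lifted from a single phase of matching to the recursive multi-phase matching carried out by $FindNext_{rec}$. The outer induction is on $k^*$, the maximum number of disjoint occurrences of $T$ in $D$. The core technical ingredient is an ``earliest-matching'' lemma for trees, which plays the role of Lemma \ref{lemma2}.

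First I would make precise what ``earliest'' means for a tree matching. A tree matching assigns a time $t_v$ to every non-root node $v$ (the time of the communication from $v$'s parent to $v$), so a matching $\mu$ is a tuple $(t_v)_{v\in T\setminus\{root\}}$ satisfying, at every internal node $w$ with children $c_1,\ldots,c_k$: the sibling constraint among $\{t_{c_1},\ldots,t_{c_k}\}$ (as in Algorithm-Sibling), and, for the chain from $w$ to each child $c_i$, the chain constraint $(t_{c_i}-t_w)\in[\tau_{min},\tau_{max}]$ (treating $t_{root}=-\infty$ or unconstrained). I would then say $\mu\preceq \mu'$ iff $t_v\le t'_v$ for every $v$.

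Next I would prove, by structural induction on the tree, the key lemma: $FindNext_{rec}$, called with parent time $t$ (or $NULL$ at the root), returns the $\preceq$-earliest valid matching of the subtree whose root-time respects the chain constraint with $t$, or $FALSE$ if none exists. The base case is a leaf, which is trivial. For the inductive step at an internal node with children $c_1,\ldots,c_n$, the algorithm runs Algorithm-Sibling on the phase lists at that level to obtain the earliest valid sibling matching $m=(t_1,\ldots,t_n)$ (Lemma \ref{lemma2}), and then, by the inductive hypothesis applied to each child subtree, obtains the $\preceq$-earliest completion consistent with each $t_j$. The three-case analysis ($m\sim t$, $m<t$, $m>t$) corresponds exactly to the advance steps used in Algorithm-Sibling/Algorithm-Chain and ensures that we scan the pointers monotonically, without skipping any candidate that could participate in an earlier matching. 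Monotonicity of the backtracking step in case 3 (when some child subtree cannot be completed and we must advance the current-level pointer) is what I expect to be the main obstacle: I would argue that any lower assignment at the parent level would have been examined on an earlier iteration, so advancing the parent pointer cannot lose an earlier valid matching.

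With the earliest-matching lemma in hand, the outer induction follows the template of Theorem 1. If $k^*=0$, then by the lemma (contrapositive) $FindNext$ returns $FALSE$ on $D$ and the algorithm outputs zero occurrences. Assume the statement for all datasets with optimum $k^*\le M$, and consider $D$ with optimum $k^*=M+1$. Let $\mu^*_1,\ldots,\mu^*_{M+1}$ be an optimal disjoint sequence ordered by the causality requirement (each subsequent occurrence strictly later componentwise). By the earliest-matching lemma, the first match $\mu_1$ produced by Tree-Mine satisfies $\mu_1\preceq \mu^*_1$. Hence the residual data $D_{rem}$ (obtained by incrementing the list pointers past $\mu_1$) still contains $\mu^*_2,\ldots,\mu^*_{M+1}$ as disjoint occurrences, so $D_{rem}$ has optimum $\ge M$; combined with the trivial upper bound $\le M$ (else $D$ would admit $>M+1$ disjoint matchings together with $\mu_1$), the optimum on $D_{rem}$ is exactly $M$. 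The inductive hypothesis then gives $M$ further occurrences from $D_{rem}$, for a total of $M+1$, matching the optimum.
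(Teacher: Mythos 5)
Your proposal is correct and its overall architecture coincides with the paper's: both rest on an earliest-matching lemma for a single occurrence of $T$ (the paper's Lemma~\ref{lemma5}, your $\preceq$-minimality claim for $FindNext_{rec}$), followed by a greedy-stays-ahead argument showing that taking the earliest occurrence and recursing on the residual lists cannot fall short of the optimum. The one genuine difference is how that lemma is established. The paper treats an occurrence as a flat tuple of times and runs an exchange argument: if another valid occurrence were earlier in some coordinate, take the first index $i$ with $s_i < t_i$ and verify (by induction on the number of subsequent coordinates) that $(t_1,\ldots,t_{i-1},s_i,\min(t_{i+1},s_{i+1}),\ldots)$ still satisfies every chain and sibling constraint, contradicting firstness --- a direct transplant of the proof of Lemma~\ref{lemma2}. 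You instead argue by structural induction on the tree, invoking the Algorithm-Sibling earliest-matching property at each internal node and the inductive hypothesis on each child subtree. Your route follows the actual recursion of $FindNext_{rec}$ more faithfully and correctly isolates the delicate point --- that advancing the parent-level pointer in case 3 cannot skip an earlier valid occurrence --- which you flag but do not fully discharge; note the paper does not discharge it either (its Lemma~\ref{lemma4}, that a valid matching is found when one exists, is asserted to follow ``directly from the construction''). Your explicit outer induction on $k^*$, with the observation that $\mu_1 \preceq \mu_1^*$ plus the causal ordering leaves $\mu_2^*,\ldots,\mu_{M+1}^*$ intact in $D_{rem}$, is also a tightened version of the paper's closing step, which asserts the same fact as a one-line contradiction.
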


\begin{proof}
Proof by contradiction. 
Given a set of time lists $L=(L_1, L_2, \ldots, L_n)$ that 
specify a tree, our algorithm produces a matching 
$M=(m_1, m_2, \ldots, m_k)$, where each matching $m_i$ is a sequence of 
$n$ times from each of the $n$ time 
lists $m_i=(t_1^{i}, t_2^{i}, \ldots, t_n^{i})$. 
Let $M^{*}=(m^{*}_1, m^{*}_2, \ldots, m^{*}_{k^{*}})$ be a maximum 
matching of size $k^*$.
The next lemma follows directly from the construction of the 
Algorithms.

\begin{lemma}
\label{lemma4}
If there is a valid matching our algorithm will find one.
\end{lemma}

\begin{lemma}
\label{lemma5}
Algorithm Tree-Mine finds an earliest valid matching(occurrence). 
Let the first valid matching found by our algorithm 
be $m_1 = (t_1, t_2, \ldots, t_n)$, then for any other valid 
matching $m' = (s_1, s_2, \ldots, s_n)$ $t_i \leq s_i$ 
$\forall$ $i = 1,\cdots,n$.
\end{lemma}

\begin{proof}
Proof by contradiction. 
Assume that in $m_1$ and $m'$ there exists a corresponding pair of times $s < t$ 
and let $s_i$, $t_i$ be the first such pair. 
Since $m_1$ and $m'$ are valid matchings, then $s_i$ and $t_i$ obey the chain 
constraints: 
$ \tau_{min} \leq (t_i - t_p) \leq \tau_{max}$ (where $t_p$ is a time 
passed down by the parent node), 
$\tau_{min} \leq ( (t_{children} - t_{i}) \leq \tau_{max}$ (where $t_{children}$ are 
times of children of $t_i$), similarly
 $\tau_{min} \leq (s_i - s_p) \leq \tau_{max}$, 
$\tau_{min} \leq (s_{children} - s_{i}) \leq \tau_{max}$; and obey the 
sibling constraint: 
$ \tau_{min} \leq (t_{i+1} - t_i) \leq \tau_{max}$, 
$\tau_{min} \leq (t_i - t_{i-1}) \leq \tau_{max}$ 
(where $t_{i-1}$ and $t_{i+1}$ are matched times of neighboring siblings of $t_i$)
and similarly $\tau_{min} \leq (s_{i+1} - s_i) \leq \tau_{max}$, 
$\tau_{min} \leq (s_{i} - s_{i-1}) \leq \tau_{max}$.

Since $s_i < t_i$, then $\tau_{min} < (t_{i+1} - s_i)$ and $\tau_{max} > (s_i - t_{i-1})$. 
Also because $s_{i-1} \geq t_{i-1}$, we get that $ \tau_{min} \leq (s_i - t_{i-1})$ and 
since $(s_{i+1} - s_i) \leq \tau_{max}$, then $(min(t_{i+1},s_{i+1}) - s_i) \leq \tau_{max}$ 
as well. By similar reasoning since $s_i < t_i$, then 
$\tau_{min} < (t_{children} - s_i)$ and $\tau_{max} > (s_i - t_{p})$; also 
since $s_{p} \geq t_{p}$, we get that $ \tau_{min} \leq (s_i - t_{p})$ and 
since $(s_{children} - s_i) \leq \tau_{max}$, then $(min(t_{children},s_{children}) - s_i) \leq \tau_{max}$ 
as well. But if $s_i$ satisfies all of the chain and sibling constraints, 
then $m_1$ would not be the first valid matching as has already been proven for 
algorithms chain and sibling triples, 
because the first matching $m_f$ would contain 
$m_f = (t_1, t_2, \ldots , t_{i-1}, s_i, min(t_{i+1}, s_{i+1}), min(t_{i+2}, s_{i+2}), \ldots , min(t_{n}, s_{n}))$. 
Thus, algorithm Tree-Mine finds the earliest possible matching(occurrence). 
\end{proof}

Now let us for the purpose of contradiction assume that Tree-Mine does not find a maximum 
number of occurrences of a specified tree $T$, s.t. $k < k^{*}$.


The situation where $k < k^{*}$ can only appear if $M^{*}$ discovers an occurrence of $T$ before 
the Tree-Mine does, s.t. some occurrence $m_i^{*}$ which is earlier then its respective occurrence $m_i$. 
But such a situation can not happen, since, given the set of time lists (or the remainder of them, if we 
already processed some of them) which define the tree $T$, Tree-Mine guaranties to find the earliest 
valid match by Lemma~\ref{lemma5}. Thus, we obtain a contradiction. 
This proves that Tree-Mine correctly finds the maximum number of occurrences of a specified 
tree $T$.
\end{proof}

\begin {theorem} 
Algorithm Tree-Mine runs in $O(d_{max} \cdot \|D\|)$.
\end{theorem}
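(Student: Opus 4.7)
The plan is to amortize the entire runtime against the forward movement of pointers into the time lists, exactly as in the chain and sibling subroutines. First I would establish a structural invariant: every pointer $P_i$ into a time list $L_i$ is monotonically non-decreasing throughout the execution of Tree-Mine. Inspecting $FindNext_{rec}$, the only lines that modify a pointer are the ``increase $t_j$ pointer'' steps (in cases $m<t$ and when $truth_j=0$) together with the ``Increment List Pointers'' step in the outer Tree-Mine loop after a successful match. In every case the pointer strictly advances. The $m>t$ case returns control to the caller without decrementing anything; it only causes the caller's own pointer to be advanced on the next iteration. So no pointer is ever rewound.

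The second step is a charging argument. Each iteration of $FindNext_{rec}$ at a node of arity $d_v$ does one of the following: (i) it runs one pass of Algorithm-Sibling, which by the analysis in the sibling subroutine costs $O(d_v)$ work per advance of one of the child pointers; (ii) it issues a recursive call that either returns success (in which case its cost is charged to the eventual match) or returns failure with $truth_j=0$, causing a forward advance of the pointer $P_j$; or (iii) it returns $m>t$ to the caller, which then advances its own local pointer. In each case, constant-plus-$O(d_v)$ units of work at node $v$ can be charged either to a forward advance of some pointer, or to the completion of a match. I would make this precise by defining a potential that equals the total number of remaining advances possible across all pointers, and argue that each non-matching step decreases the potential by at least one while doing $O(d_{max})$ work.

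The third step is to bound the two quantities being charged. The total number of pointer advances across the entire run is at most $\sum_i \|L_i\| = \|D\|$, since every pointer only moves forward and stops once it exits its list. Each advance is charged $O(d_{max})$ units (the maximum arity of any node in $T$, since this bounds the width of any single call to Algorithm-Sibling). The total number of stored matches is at most $\|D\|/|T|$, and the $O(|T|)$ work to store and re-initialize list pointers per match aggregates to $O(\|D\|)$, which is absorbed. Combining, the total runtime is $O(d_{max}\cdot \|D\|)$.

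The step I expect to be the main obstacle is the bookkeeping for case (iii), where a failure at a deeper level forces the algorithm to back up via $m>t$ and advance a pointer at a shallower level. I need to argue that between two successive advances of the same deep pointer, only $O(d_{max})$ units of work are done at that level, so that a single deep pointer does not get repeatedly visited during many shallow-level advances without progress. This follows because whenever control returns from a shallow advance back down into node $v$, the first thing done is one fresh call to Algorithm-Sibling at $v$, after which either a match is produced (charged to the match) or some child pointer under $v$ is advanced (charged to that advance). There is no scenario in which $v$ is revisited without either its parent's pointer or one of its own child pointers having moved forward, so the total number of visits to $v$ is bounded by the advances of pointers in lists associated with $v$ and its parent, each contributing $O(d_{max})$ work. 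Summing over all nodes gives the claimed $O(d_{max}\cdot \|D\|)$ bound.
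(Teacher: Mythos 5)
The paper states this theorem with no proof at all (the source contains only blank lines after the theorem environment), so there is nothing of record to compare your argument against; I can only assess it on its own terms. Your amortized charging argument is the natural one and I believe it is essentially sound: the pointers in Tree-Mine and $FindNext_{rec}$ are indeed never rewound (the $m>t$ case only propagates failure upward, causing the caller's pointer to advance), so the total number of pointer advances is at most $\sum_i\|L_i\|=\|D\|$, and it suffices to charge $O(d_{max})$ work to each advance and $O(|T|)$ to each stored match. The one step you flag as the main obstacle is in fact the place to be most precise, and your resolution is slightly mis-stated: a node $v$ is re-entered with nontrivial work not merely ``when its parent's pointer moved,'' but because the guard in $FindNext_{rec}$ (re-call child $j$ only if $truth_j=0$, or $truth_j=1$ and $prev_j<t_j$) ensures that a fresh call into $v$ happens only when the matched time on the single edge $parent(v)\rightarrow v$ has actually advanced or is about to. Without that guard, one pointer advance at a node $p$ would trigger fresh $O(d_{c})$-cost sibling checks in all $d_p$ children, giving only $O(d_{max}^2\cdot\|D\|)$ in the worst case (e.g.\ a two-level star of stars); with it, each entry into $v$ is chargeable to an advance of that one edge pointer, so the work attributable to an edge $u\rightarrow v$ is $O(\|L_{u\rightarrow v}\|\cdot(d_u+d_v))$, and summing over the edges of $T$ yields $O(d_{max}\cdot\|D\|)$ as claimed. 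If you tighten that one point, your proof fills a gap the paper simply leaves open.
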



\subsection{Mining all Frequent Trees}

Here we propose an algorithm which allows us to discover 
the frequency of general trees and to enumerate all statistically 
significant general trees of a specified size and frequency. 
The parameters \math{\tau_{min},\tau_{max}, \delta} and $\kappa$ must be specified. 
Additionally you can specify the min and the max tree size to bound  
the size of the trees of interest. The parameter $\kappa$ in this algorithm 
represents the minimal frequency threshold, 
and is used to discard the trees which occur 
fewer times then the specified threshold.

As for any tree mining problem, there are two main steps 
for discovering frequent trees. 
First, we need a systematic way of generating 
candidate trees whose frequency is to be computed. 
Second, we need efficient ways of counting 
the number of occurrences of each
candidate in the database $D$ and determining which
candidates pass the threshold. 
To address the second issue we use our \emph{ Algorithm Tree-Mine} 
to determine the frequency of a particular tree. 
To systematically generate new candidate trees we inherit the idea of an 
\emph{Equivalence Class-based Extensions} and the \emph{Rightmost Path Extensions} proposed and described in
\cite{mining2005a, mining2005b}. 

The algorithm proceeds in the following order: 

(\rn{1})
Systematically generate new candidates, by extending only the frequent trees 
until no more candidates can be extended;

(\rn{2})
Use \emph{ Algorithm Tree-Mine} to determine the frequency of our candidates; 

(\rn{3})
If the candidate's frequency is above threshold - store the candidate.

The main advantage of equivalence class extensions is that only known frequent 
elements are used for extensions. But to guaranty that all possible extensions 
are considered, the non-redundant tree generation idea has to be relaxed. In this way 
the canonical class (considers candidates only in canonical form) and equivalence class 
extensions represent a trade-off between the number of 
isomorphic candidates generated and the number of potentially frequent candidates to count.

\begin {theorem} 
Mining all of the tress on the current level requires $O(n^2 \cdot d_{max} \cdot \|D\| \cdot (v + log(d_{max})))$ operations.
\end{theorem}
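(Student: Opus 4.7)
The plan is to bound the total cost by decomposing it into three contributions: (i) the number of candidate trees generated on the current level, (ii) the per-candidate cost of counting occurrences with Algorithm Tree-Mine, and (iii) the per-candidate overhead of tree construction and access to the data structure that stores the receiver lists.

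First, I would bound the number of candidates generated on the current level. Using equivalence class-based extensions in the style of \cite{mining2005a, mining2005b}, new candidates of size $v+1$ are produced by joining pairs of frequent trees of size $v$ that share a common prefix (a rightmost-path extension), so within each equivalence class the number of joins is quadratic in the class size. Summed over all equivalence classes at this level, the total number of candidates is bounded by $O(n^2)$, where $n$ denotes the number of distinct actors that could extend a frequent prefix. I would state this as a counting lemma and rely on the standard properties of rightmost-path extensions to avoid overcounting.

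Second, for each candidate I would invoke the earlier bound: Algorithm Tree-Mine computes the frequency of a tree of size $v$ in $O(d_{max} \cdot \|D\|)$ time. Per candidate, there is additional bookkeeping: constructing the candidate from its two parents, comparing against the stored canonical prefix, and traversing the adjacency list $L$ each cost $O(v)$, while every lookup of a child's time list in the balanced search tree (indexed by receiver id) costs $O(\log d_{max})$. These two sources combine to an additive $O(v + \log d_{max})$ multiplicative factor applied to the Tree-Mine cost, giving $O(d_{max} \cdot \|D\| \cdot (v + \log d_{max}))$ per candidate.

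Multiplying the $O(n^2)$ candidate bound by the per-candidate bound yields the claimed $O(n^2 \cdot d_{max} \cdot \|D\| \cdot (v + \log d_{max}))$. The main obstacle is the first step: carefully justifying the $O(n^2)$ candidate count across all equivalence classes simultaneously, since equivalence class-based extensions, unlike strict canonical class extensions, may generate isomorphic duplicates and so the accounting must be done over (prefix, extension) pairs rather than over isomorphism classes. Once that counting argument is in place, the rest of the bound is a direct application of the Tree-Mine complexity together with the $O(\log d_{max})$ cost of the balanced search tree lookups.
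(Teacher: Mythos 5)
The paper states this theorem but supplies no proof at all---the theorem environment is immediately followed by the next section---so there is nothing of the paper's to compare your argument against, and your proposal has to be judged on its own terms. Your decomposition, an $O(n^2)$ bound on the number of candidates produced by equivalence-class/rightmost-path extensions at the current level multiplied by a per-candidate frequency-counting cost, is the natural reading of the stated bound and is consistent with the earlier (also unproved) theorem that Tree-Mine runs in $O(d_{max}\cdot\|D\|)$ time. The quadratic candidate count from joining pairs within an equivalence class is the standard accounting for this style of tree mining, and your caveat that the count must be taken over (prefix, extension) pairs rather than isomorphism classes is the right one, since the paper explicitly relaxes non-redundant generation.

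The one step that does not go through as written is the origin of the multiplicative factor $(v+\log d_{max})$. You introduce the $O(v)$ term as the cost of constructing a candidate from its parents, checking it against the canonical prefix, and traversing the adjacency list $L$; these are one-time, additive costs per candidate. An additive $O(v+\log d_{max})$ per candidate is dominated by the $O(d_{max}\cdot\|D\|)$ counting cost and would simply disappear from the final bound---it cannot produce a factor multiplying $d_{max}\cdot\|D\|$. To obtain the multiplicative form you need the stronger claim that each of the $O(d_{max}\cdot\|D\|)$ elementary steps of Tree-Mine (the pointer advances inside $FindNext_{rec}$) itself costs $O(v+\log d_{max})$: for instance, each advance forces a re-check of the chain and sibling constraints along a rightmost path of length $O(v)$, and each access to a child's time list goes through the balanced search tree of receiver lists at cost $O(\log d_{max})$. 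That is very likely the intended accounting, but it is a different statement from the one you make, and it is the statement you would actually have to prove; as written, your per-candidate bound of $O(d_{max}\cdot\|D\|\cdot(v+\log d_{max}))$ is asserted rather than derived.
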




\section{Comparing Methods}
\label{sec:Measuring Similarity between Sets of Overlapping Clusters}

To compare methods, we need to be able to measure similarity 
between sets of overlapping clusters. We will use the {\it Best Match} 
approach proposed in \cite{similarity2010}, which we briefly describe here.

We formally define the problem as follows: 

\begin{itemize}
\item Let $C_1 = \{S_1, S_2, \ldots, S_n \}$ and $C_2 = \{ S_{1}', S_{2}', \ldots, S_{m}'\}$ 
be the two clusterings of size $n$ and $m$ respectively, 
where $S_{i}$ and $S_{j}'$ 
are the groups that form the clusterings. 
A group does not contain duplicates.

\item Let $D_{(C_{1}, C_{2})}$ be the distance, 
between the clusterings $C_{1}$ and $C_{2}$.
    
\item The task is to find $D_{(C_{1}, C_{2})}$ efficiently, 
while ensuring that $D_{(C_{1}, C_{2})}$ reflects 
the actual distance between the network 
structures that $C_{1}$ and $C_{2}$ represent.    
\end{itemize}

The {\it Best Match} algorithm determines how 
well the clusterings represent each other. 
That is when given $C_1 = \{S_1, S_2, \ldots, S_n \}$ 
and $C_2 = \{ S_{1}', S_{2}', \ldots, S_{m}'\}$ 
it will determine how well $C_2$ represents 
$C_1$ and vice-versa. 

We begin by considering every group $S \in C_1$ and 
finding a group $S' \in C_2$ with 
the min distance $d_{(S, S')}$ between them. 
The best match algorithm can run with 
any set difference measure which measures the distance 
between two sets $S$, $S'$. 
We define the distance $d_{(S, S')}$ 
between the two groups $S$ and $S'$ as the 
number of moves(changes) necessary to convert $S$ 
into $S'$: 
\[
d_{(S, S')} = |S| + |S'| - 2|S \cap S'|
\]
Note, that alternatively we can also define $d_{(S, S')}$ as:
\[
d_{(S, S')} = 1 -  \frac{|S \cap S'|}{| S \cup S' | }
\]
As we step through $C_1$, we find 
for each group $S_k \in C_1$ the closest 
group $S_{l}' \in C_2$ with a minimal distance $d_{(S_k, S_l')}$:
\[
d_{(S_k, C_2)} = \min_{l = 1, \ldots, m}(d_{(S_k, S_{l}')})
\]
Next we sum up all such distances. 
For the purposes of normalization 
one can normalize the 
obtained sum by the 
total number of distinct members $T_{C_1}$ in $C_1$ 
to obtain $D_{(C_1, C_2)}$:
\[
D{(C_1, C_2)} = \frac{\sum_{k = 1}^{n}d_{(S_k, C_2)}}{T_{C_1}},
\]
\[
T_{C_1} =  \| \cup_{k = 1}^{n} S_k \|;  
\]
this normalization computes a distance per node. One can 
also normalize by $\|C_1\|$ and $\|C_2\|$.

So far we successfully found the distance 
measure $D_{(C_1, C_2)}$ of how well 
the groups in $C_1$ are represented in $C_2$. 
If this {\it asymmetric} measure of the distance 
is considered adequate, 
one may stop the algorithm here. 
However, since in most of the cases 
we want the measure to be {\it symmetric} 
with respect to both clusterings, 
we also want to know how well $C_1$ represents 
$C_2$. We will thus repeat the same calculation 
for each group in $C_2$ with respect to the 
groups in $C_1$ and normalize the sum of 
distances using one of the normalization methods. 
Finally, the {\it Best Match } symmetric distance between 
a pair of clusterings $C_1$ and $C_2$ defined as:  
\[
D_{BestMatch}(C_1, C_2) = \frac{D_{(C_1, C_2)} + D_{(C_2, C_1)}}{2}
\]

This result can be viewed as a representation of 
the average number of moves per distinct 
member (or set) necessary to represent one clustering by the other.


Intuitively the {\it Best Match} algorithm is a 
relative measure of distance, and reflects how well two 
clusterings represent each other, and how similar/different are the 
social networks formed by these clusterings. 
This approach is not sensitive to having clusterings 
of different size or having overlapping sets. Refer to \cite{similarity2010} 
for more details.

\label{sec:enron-data-main}
\label{sec:enron-data}
\section{Enron Data}

The Enron email corpus consists of emails 
released by the U.S. Department of Justice
during the investigation of Enron.
This data includes about 3.5 million emails sent from and to
Enron employees between 1998 and 2002.
The list of approximately 150 employees mailboxes constitute the Enron dataset. 
Although the dataset contains emails related to thousands of Enron
employees, the complete information is only known for this smaller
set of individuals. 
The corpus contains detailed information about each email,
including sender, recipient(s) (including To, CC, and BCC fields),
time, subject, and message body.
We needed to transform this data into our standard input format 
({\sf sender, receiver, time}).
To accomplish this, for each message we generated multiple entries
({\sf sender, receiver1, time}),
$\ldots$
({\sf sender, receiverN, time}),
for all {\sf N} recipients of the message.

\section{Weblog (Blog) Data}
\label{sec:weblog-data}

This data set was constructed by observing 
Russian {\sf livejournal.com} blogs. 
This site allows any user to  create a blog at no cost.
The user may then submit text messages called {\em posts} onto their
own page.
These posts can  be viewed by anyone who visits
their page.
For the purposes of our analysis we would like to know how information
is being disseminated throughout this blog network.
While data about who accessed and read individual home pages is not available,
there is other information with which we can identify communications.
When a user visits a home page, he or she may decide to leave {\em comments}
on one or more posts on the page.
\begin{wrapfigure}{r}{2.4in}
\vspace*{-0.1in}
\centering
\resizebox{2in}{!}{\includegraphics{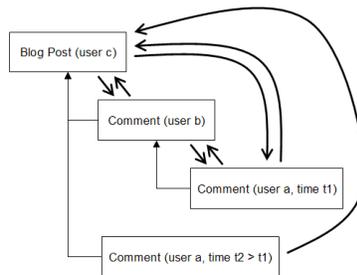}}
\caption{Communications inferred from weblog data.}
\label{fig:comment-links}
\end{wrapfigure}
These comments are visible to everyone, and other users may leave comments
on comments, forming  trees of communications rooted at each post.
We then must process this information
into links of the form ({\sf sender, receiver, time}).
We make the following assumptions for a comment by user $a$ at time $t$
in response to a comment written by user $b$, 
where both comments pertain to a post written by user $c$:
\math{a} has read the original post of \math{c}, hence a 
communication
($c, a, t$) if this was the earliest
comment $a$ made on this particular post.
\math{c} reads the comments that are made
on his site, hence
a communication ($a, c, t$);
\math{a} read the comment to which he is replying,
hence the communication ($b, a, t$);
\math{b} will monitor comments to his comment, hence
the communication ($a, b, t$);
Fig.~\ref{fig:comment-links} shows these assumed communications.
Note that the second post by user $a$ only generates a communication
in one direction, since it is assumed that user $a$ has already read
the post by user $c$.

In addition to making comments, LiveJournal members may select
other members to be in their ``friends'' list.
This may be represented by a graph where there is a directed edge
from user $a$ to user $b$ if $a$ selects $b$ as a friend.
These friendships do not have times associated with them, and so
cannot be converted into communication data.
However, this information can be used to validate our algorithms,
as demonstrated in the following experiment.

The friendship information may be used to verify the groups that have been
discovered by our algorithm.
If the group is indeed a social group, the members should be more
likely to select each other as a friend than a randomly selected group.
The total number of members in the friendship network is 2,551,488, with
53,241,753 friendship
links among them, or about 0.0008 percent of all
possible links are friendship links.
Thus, we would expect about 0.0008 percent of friendship links to be
present in a randomly selected group of LiveJournal members.

\section{Experimental Results}
\label{sec:Experimental Results and Conclusions}

\subsection {Triples in Enron Email Data}
For our experiments we considered the Enron email corpus
(see Section~\ref{sec:enron-data}). 
We took $\tau_{min}$ to be 1 hour and $\tau_{max}$ to be 1 day. 
Fig.~\ref{fig:triples} compares the number of
 triples occurring in the data to the number that occur randomly 
in the synthetically generated  data using the model
derived from the Enron data.
As can be observed, the number of triples 
in the data by far exceeds the random triples. After some frequency 
threshold, no random triples of higher frequency appear - i.e., 
all the triples appearing in the data at this frequency are 
significant. We used $M = 1000$ data sets to determine the random 
triple curve in Fig.~\ref{fig:triples}.

The significance thresholds we discover prove that the probability 
of a triple occurring at random above the thresholds is in practice 
very close to zero. In other words, the observed probability $B$ 
of a random triple occurring above the specified threshold is $0$, 
however the true probability $T$ of a random triple occurring above the 
threshold is not $0$. Thus to put a bound on the 
true probability $T$, we use the Chernoff bound: 
$P( T < \epsilon ) \geq 1 - e^{-2 \cdot n \cdot \epsilon ^2}$,
where $n$ is the number of random sets we generated ($M = 1000$) and 
$\epsilon$ would be an error tolerance. Setting $\epsilon = 0.05$, 
we have that the probability of $P( T < 0.05 ) \geq 0.9933$.

\begin{figure}[ht]
  \centering
  \begin{tabular}{c c}
   \resizebox{7cm}{!} {\label{fig:chains}\includegraphics[scale=1]{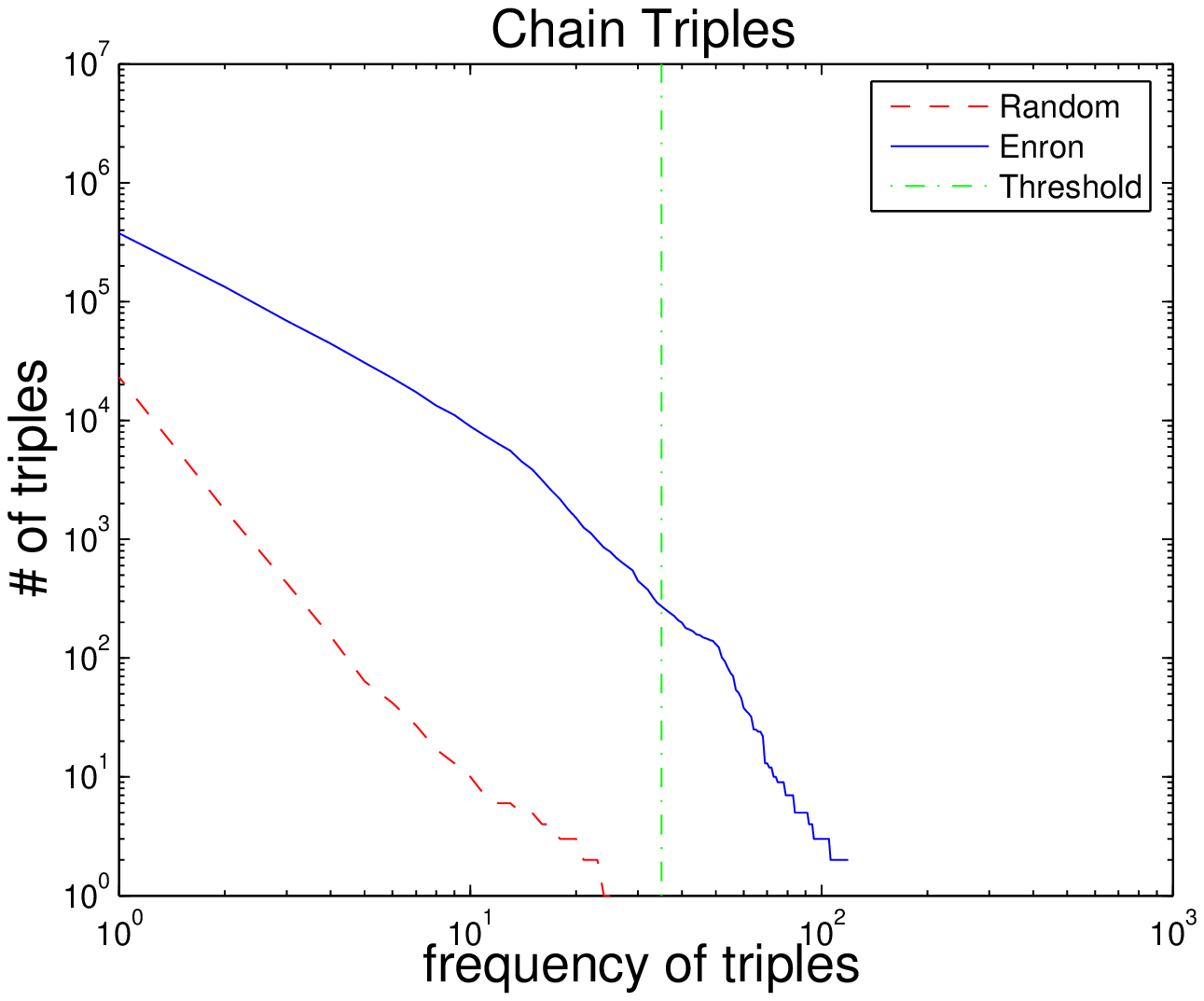}}
   &
   \resizebox{7cm}{!} {\label{fig:siblings}\includegraphics[scale=1]{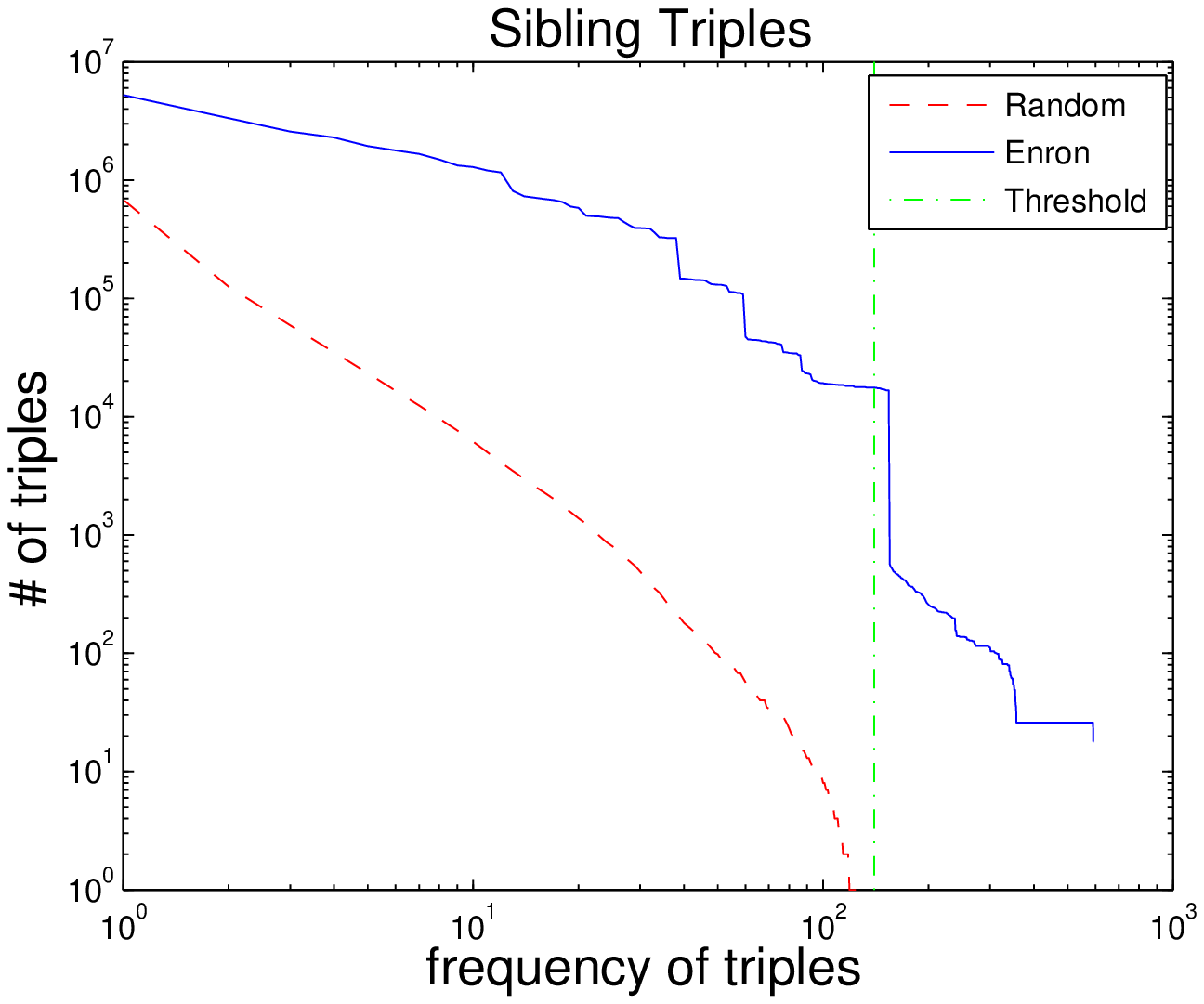}} 
   \\
   (a) & (b)
   \end{tabular}
  \caption{Abundance of triples occurring as a function of frequency of 
occurrence. (a) chain triples; (b) sibling triples }
  \label{fig:triples}
\end{figure}

\begin{figure}[ht]
\centering
\begin{tabular}{c c c}
   \resizebox{3.5cm}{3.2cm}{\label{fig:year_4}\includegraphics{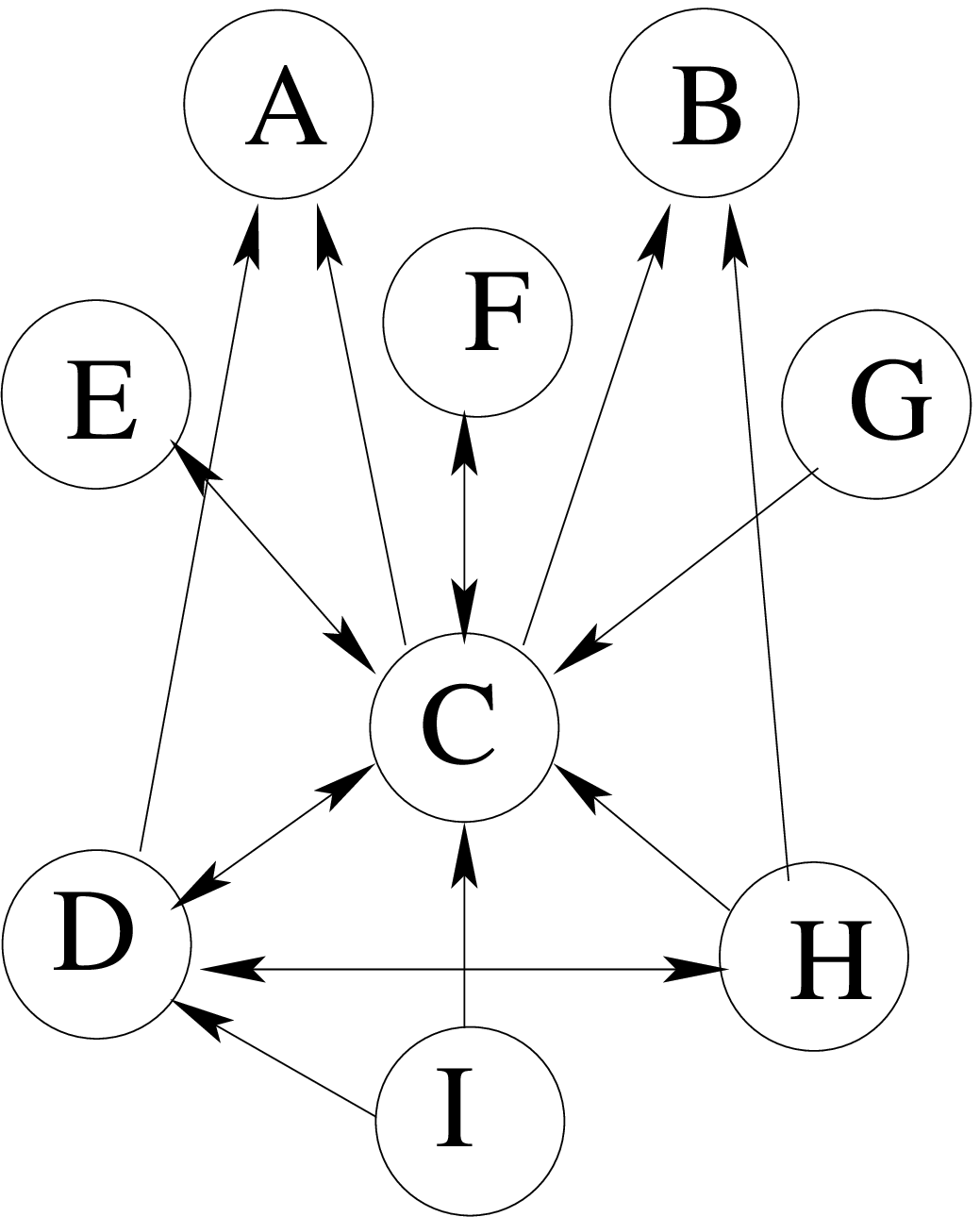}}&
   \resizebox{3.5cm}{3.2cm}{\label{fig:year_5}\includegraphics{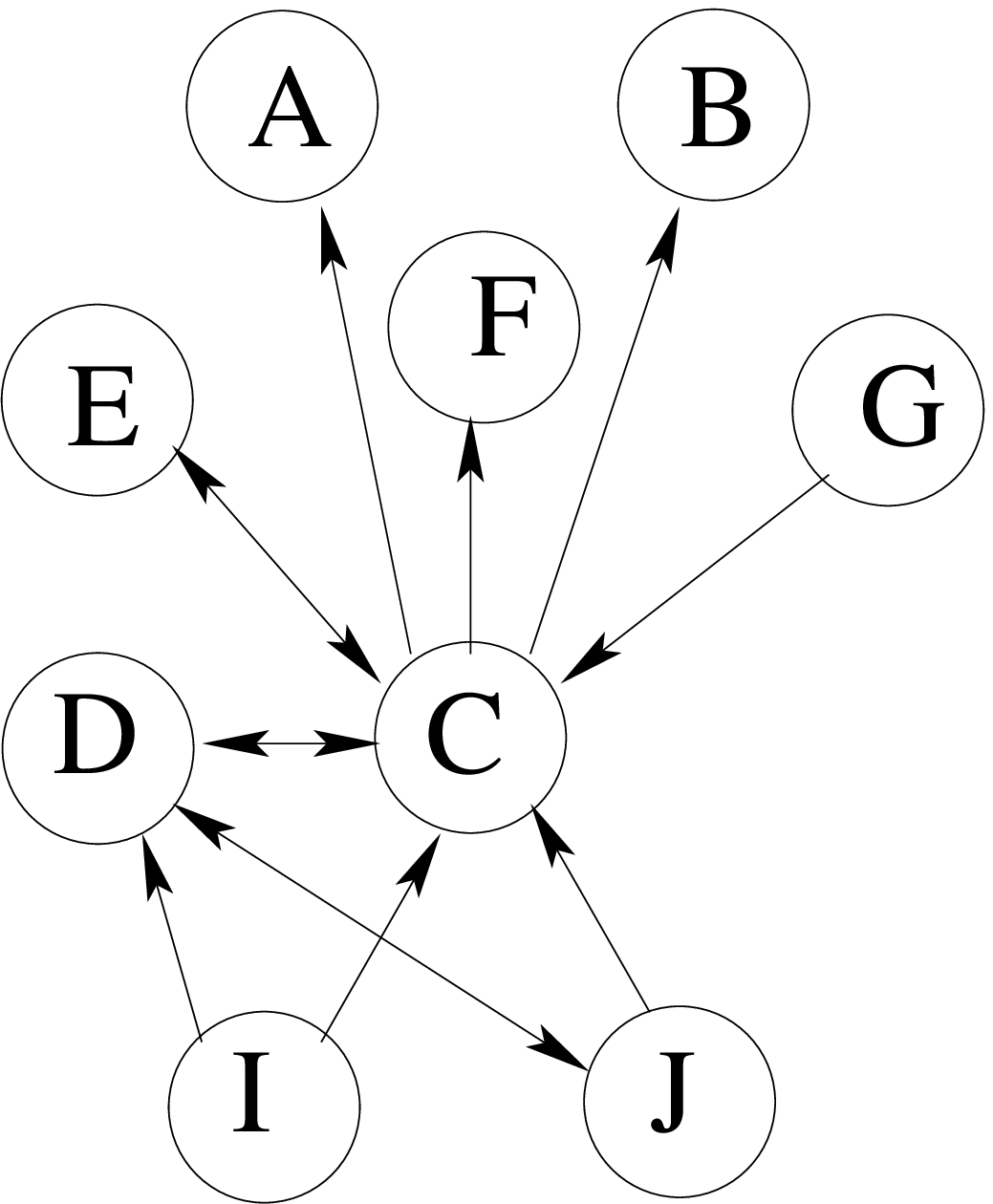}}&
   \resizebox{3.5cm}{3.2cm}{\label{fig:year_6}\includegraphics{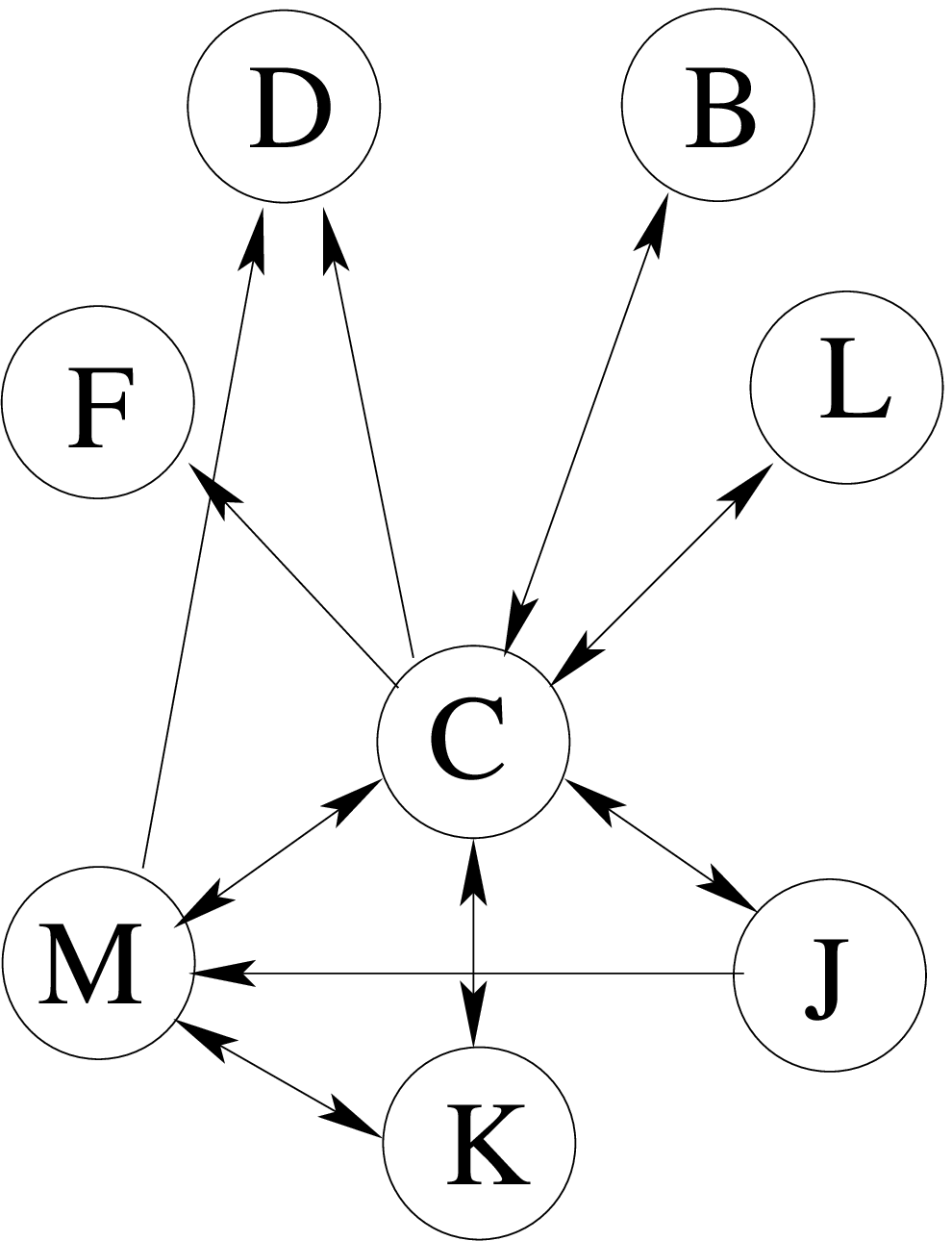}}
   \\
Sept. 2000 - Sept. 2001
& Mar. 2000 - Mar. 2001
& Sept. 2001 - Sept. 2002
\end{tabular}
\caption{Evolution of part of the Enron organizational structure from 2000 - 2002. Note: 
actors $B,C,D,F$ present in all three intervals. Here is who they are: $B$ - T. Brogan, 
$C$ - Peggy Heeg, $D$ - Ajaj Jagsi and $F$ - Thresa Allen.}
\label{fig:enron1}
\end{figure}

\subsection{Experiments on Weblog Data}
Similar experiments were run on the Weblog data to obtain communication groups
(see Section~\ref{sec:weblog-data} for a description of the Weblog data).
As a validation we used a graph of friendship links, which was 
\begin{figure}[ht]
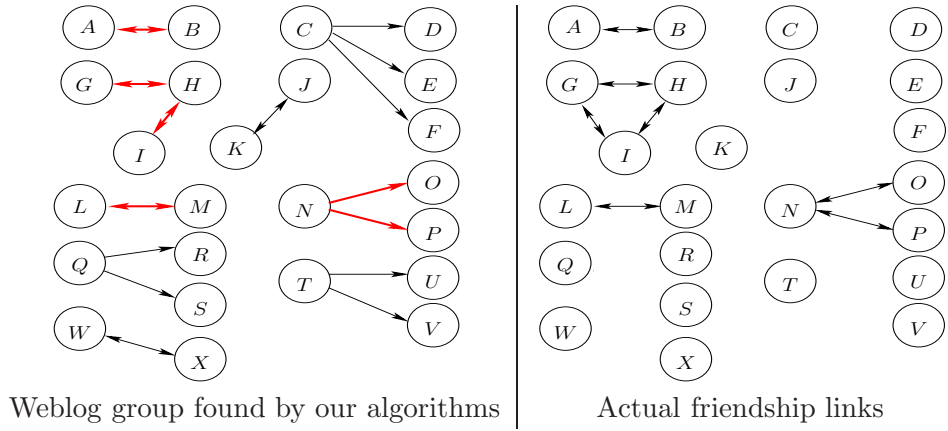

\centering
\begin{tabular}{c | c}
  \resizebox{5.5cm}{5cm}{\input{my_graph.pstex_t}}&
   \resizebox{5.5cm}{5cm}{\input{friendship.pstex_t}}
   \\
Weblog group found by our algorithms
& Actual friendship links
\end{tabular}
\caption{Validation of Weblog group communicational structure on the left against actual 
friendship links on the right.}
\label{fig:weblog}
\end{figure}
constructed from 
friendship lists of people who participated in the conversations 
during that period. 
Fig.~\ref{fig:weblog} shows one of the groups found in the Weblog 
data and the 
corresponding friendship links between the people who participated 
in that group. The fraction of friendship links for this group of
24 actors is 2.5\%, again well above the 0.0008\% 
for a randomly chosen group of 24 actors.

\subsection{General Scoring Functions vs. ``Step'' Function Comparison}

Here we would like to present a comparison of a general scoring function 
and a ``step'' function. 
We will compare a given general propagation delay functions 
$G_1$, $G_2$, $G_3$ and $G_4$,  to a 
``best fit'' step function, see Figure \ref{fig:funct2}.

\begin{figure}[ht]
\centering
\begin{tabular}{c c}
   \resizebox{7cm}{4cm} {\includegraphics[scale=1]{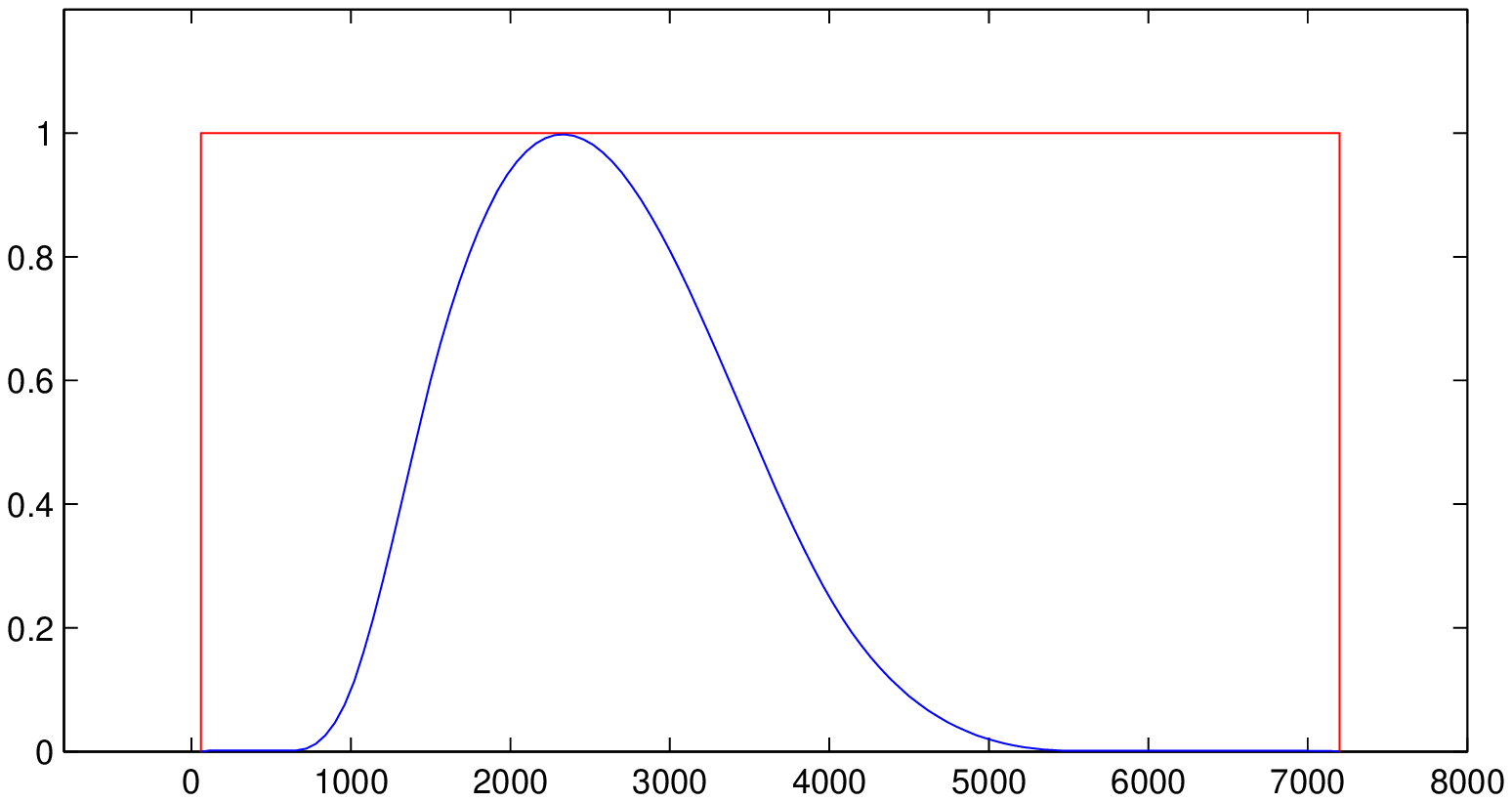}}&
   \resizebox{7cm}{4cm} {\includegraphics[scale=1]{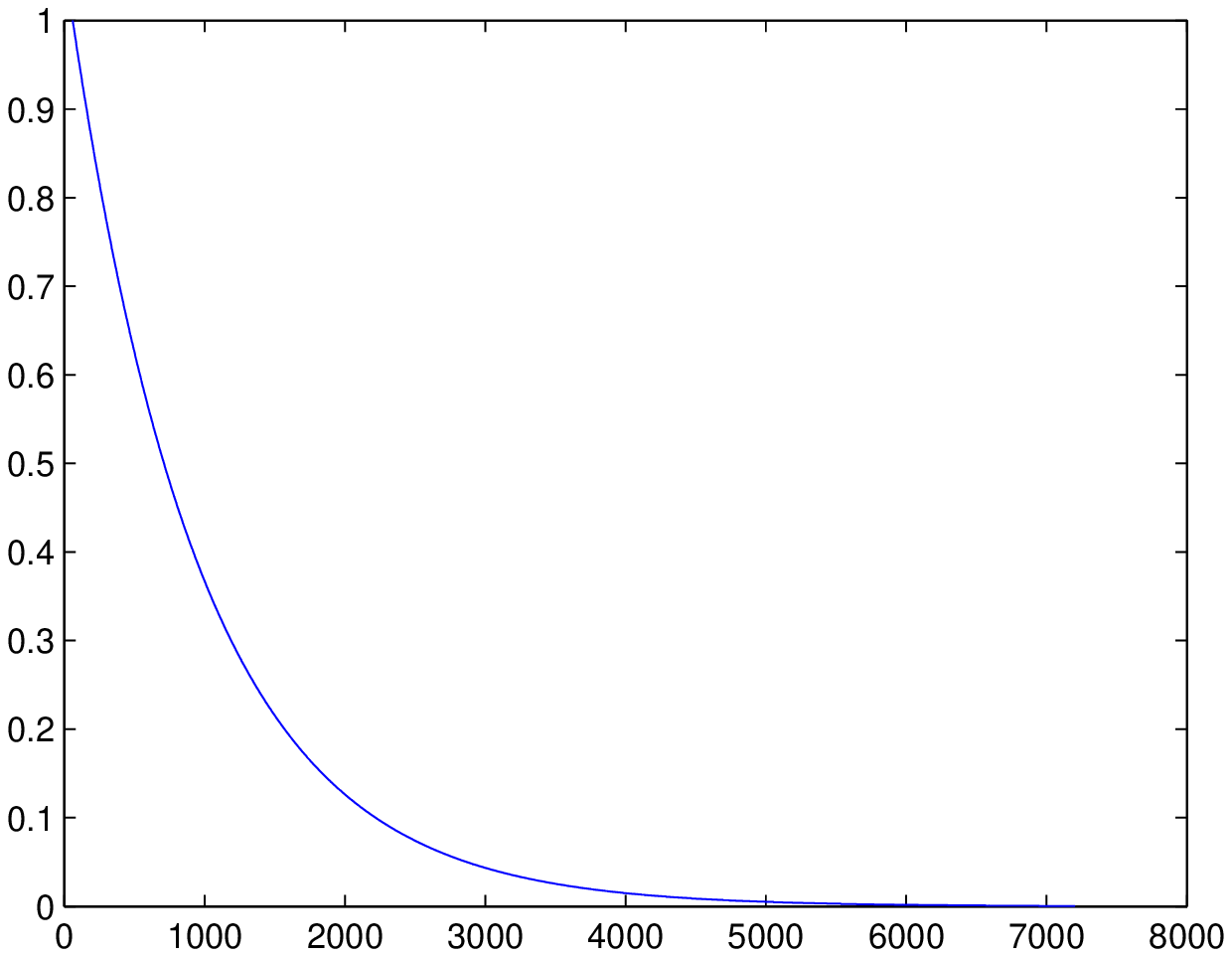}}
   \end{tabular}
  \caption{Step function $H$ and a General Response Function $G_1$ for 
$2D$ Matching on the left and Exponential Decay Response Function $G_4$ on the right.}
  \label{fig:funct2}
\end{figure}


Functions $G_2$ and $G_3$ are respectively linear monotonically increasing and linear 
monotonically decreasing functions, while $G_4$ is generated using a well 
known exponential distribution of 
the form $(y = \lambda \cdot e^{- \lambda \cdot x} )$. 
We generated $G_1$ using a cubic splines interpolation.

For the purposes of this experiment we used an Enron dataset, where we 
looked at the data which represents approximately one year of 
Enron communications and consists of $753,000$ messages. 
We obtained a set of triples of $H$ for the step function 
and a set of triples of $G_1$, $G_2$, $G_3$ and $G_4$ for the general 
propagation functions with causality 
constraint and $G_1'$, $G_2'$, $G_3'$ and $G_4'$ without causality constraint. 
Next we used our distance measure algorithms to measure the 
relative distance between these graphs. Figure \ref{fig:distance1} 
shows the discovered relative distances. 

The results indicate that functions $H$, $G_3$ and $G_4$ produce very similar 
sets of triples, which is explained by the fact that the most of 
the captured triples occur ``early'' and therefore are discovered by 
these somewhat similar functions. Also we can notice that $G_1$ 
and $G_2$ find different sets of triples, while $G_1$ still has a significant 
overlap with $H$, 
we can explain this behavior by the fact that $G_1$ and $G_2$ have 
peaks in different time intervals and thus capture triples occurring 
in those intervals.

\begin{figure}[t]
\begin{center}
  \begin{tabular}{ | l || c | c | c | c | c |}
    \hline
      & $H$ & $G_1$ & $G_2$ & $G_3$ & $G_4$ \\ \hline
    { $G_1$} & 0.63 & - & 0.34 & 0.66 & 0.67 \\ \hline 
    { $G_1'$} & 0.64 & 0.98 & 0.34 & 0.65 & 0.67 \\ \hline 
    { $G_2$} & 0.22 & 0.34 & - & 0.33 & 0.18 \\ \hline 
    { $G_2'$} & 0.23 & 0.34 & 0.97 & 0.34 & 0.18 \\ \hline 
    { $G_3$} & 0.94 & 0.66 & 0.33 & - & 0.88 \\ \hline 
    { $G_3'$} & 0.95 & 0.67 & 0.34 & 0.96 & 0.89 \\ \hline 
    { $G_4$} & 0.90 & 0.67 & 0.18 & 0.88 & - \\ \hline 
    { $G_4'$} & 0.92 & 0.67 & 0.19 & 0.89 & 0.97 \\ \hline 
  \end{tabular}
\end{center}
\caption{Relative similarity between the groups of $H$, $G$s and $G'$s.}
\label{fig:distance1}
\end{figure}

In the current setting we showed that functions 
with peaks at different points will discover different 
triples. Most of the times 
in real data there seems to be no practical need for this added generality, 
however having this ability at hand may prove useful in certain settings. 
Also, since the difference is small compared to the rate of group change in the 
Enron data, hence there is not much value added by a general propagation 
delay function to justify the increase in computation cost 
from linear to quadratic time.

\subsection{Tracking the Evolution of Hidden Groups}

For chains the significance threshold frequencies  were 
$\kappa_{chain}$ = 30 and $\kappa_{sibling}$ = 160. We 
used a sliding window of one year to obtain evolving
hidden groups. 
On each window we obtained the significant 
chains and siblings (frequency $>\kappa$) 
and the 
clusters in the 
corresponding weighted overlap graph. We use 
the clusters to build the 
communication structures and show the evolution of one of the 
hidden groups in Fig.~\ref{fig:enron1} without relying 
on any semantic message 
information. The key person in this hidden group is 
actor $C$, who is {\sf Peggy Heeg}, Senior Vice President of 
El Paso Corporation. 
El Paso Corporation was often partnered with ENRON and was accused of 
raising prices to a record high during the ``blackout'' period in 
California~\cite{internet_ref2,internet_ref1}.

\subsection{Estimating the Rate of Change for Coalitions in the Blogosphere}

Next we would like to show how the approaches of distance measure, 
presented in this thesis, can be used to track the 
evolution and estimate the rate of change of the 
clusterings and groups over time. 
As our example we studied the social network 
of the Blogosphere (Live Journal). 
We found four clusterings $C_1$, $C_2$, $C_3$ and $C_4$ 
by analyzing the same social network at different times. 
Each consecutive clustering 
was constructed one week later than the previous. The task of this 
experiment is to find the amount of change that happened in 
this social network over the period of four weeks. 
The sizes of the clusterings $C_1$, $C_2$, $C_3$ and $C_4$ are 
$81348$, $82056$, $82132$ and $80217$ respectively, 
while the average densities are 
$0.630$, $0.643$, $0.621$ and $0.648$. 

\begin{figure}[b]
\centering
  \begin{tabular}{ | l || c | c | c | c |}
    \hline
      & $C_1$ - $C_2$ & $C_2$ - $C_3$ & $C_3$ - $C_4$ & Average Change\\ \hline
    {\it Best Match} & 4.31 & 5.01 & 4.83 & 4.72 \\ \hline
  \end{tabular}
\caption{The rate of change of the clusterings in Blogosphere over the period of four weeks.}
\label{fig:blogs}
\end{figure}

We can see in the Fig.~\ref{fig:blogs} that the {\it Best Match} and the 
{\it K-center} algorithms imply that the rate of change of groups 
in the blogosphere is relatively high and the groups change very dynamically 
from one week to another. 

\begin{figure*}[ht]
\centering
  \begin{tabular}{ | l || c | c | c | c |}
    \hline
      & $C_1'$ - $C_2'$ & $C_2'$ - $C_3'$ & $C_3'$ - $C_4'$ & Avg. Change\\ \hline
    {\it Best Match} & 0.3 & 0.23 & 0.24 & 0.26 \\ \hline
  \end{tabular}
\caption{The rate of change of the clusterings in the Enron organizational structure from 2000 - 2002.}
\label{fig:enron}
\end{figure*}

\subsection{Estimating the Rate of Change for Groups in the Enron Organizational Structure}

Another experiment we conducted, using the proposed distance measures, 
is estimating the rate of change for groups in the Enron organizational structure. 
We used Enron email corpus and the approach proposed in \cite{hidden2006} 
to obtain clusterings with statistically significant persistent groups in 
several different time intervals.  

On each window we first obtained the significant 
chains and siblings and then the clusterings in the 
corresponding weighted overlap graph. The  
clusterings $C_1'$, $C_2'$, $C_3'$ and $C_4'$ 
with average densities $0.65$, $0.7$, $0.71$ and $0.67$ respectively, 
were computed based on the intervals Sept. 1999 - Sept. 2000,  Mar. 2000 - Mar. 2001, 
Sept. 2000 - Sept. 2001 and Mar. 2001 - Mar. 2002.

Next we used the {\it Best Match}  and {\it K-center} algorithm to track the 
rate of change in the network.  
Fig.~\ref{fig:enron} illustrates the rate of change as well as the average rate 
of change of the clusterings. Notice that the rate of change in the email 
networks over a 6 month period are significantly lower than 
the rate of change in Blogs over a 1 week period. 
Blogs are a significantly more dynamic social network which should be no surprise.

The Fig.~\ref{fig:enron1} illustrates the structure of a single 
group in each of the clusterings as well as it gives a sense of its evolution
from one time interval to next. 

The ability to account for the overlap 
and evolution dynamics is the underlying reason why the distance found   
by the {\it Best Match} and {\it K-center} algorithms is relatively low 
for groups in the ENRON dataset. 

As a conclusion we would like to point out that Blogs and ENRON 
are two completely different social networks. ENRON represents a 
company network, which has the underlying hierarchy of command, 
which is unlikely to change quickly over time, while Blogosphere is a much 
more dynamic social network, where groups and their memberships can change 
rapidly. This behavior is well reflected in the experiments described above.

\subsection{Tree Mining Validation}

Additionally for the purpose of validation, 
we used the tree mining approach in 
conjunction with the heuristic algorithms , in order to 
verify that a discovered tree-like structure actually occurs frequently in 
the data. 
\begin{figure*}[ht]
\centering
  \begin{tabular}{ | l || c | c | c | c |}
    \hline
      & $C_1$ - $T_1$ & $C_2$ - $T_2$ & $C_3$ - $T_3$ & $C_4$ - $T_4$ \\ \hline
    {\it Best Match} & 0.323 & 0.321 & 0.294 & 0.389 \\ \hline
  \end{tabular}
\caption{The similarity between the trees and the clusterings in the Enron organizational structure from 2000 - 2002.}
\label{fig:enron2}
\end{figure*}
For the experiment, we once again used the ENRON email corpus and the time intervals 
Sept. 1999 - Sept. 2000,  Mar. 2000 - Mar. 2001, 
Sept. 2000 - Sept. 2001 and Mar. 2001 - Mar. 2002. For each interval were found 
significant 
chains and siblings and performed the clustering on the weighted graph of overlapping 
triples. The clusterings $C_1$, $C_2$, $C_3$ and $C_4$ were found. 
Next we performed tree mining in order to extract exact tree like communication 
patterns for the same intervals and obtained $T_1$, $T_2$, $T_3$ and $T_4$. 
The same significance threshold frequencies were used 
$\kappa_{chain}$ = 35 and $\kappa_{sibling}$ = 160 when we found 
$C_1$, $C_2$, $C_3$ and $C_4$.

\begin{figure*}[ht]
\centering
  \begin{tabular}{ | l || c | c | c | c |}
    \hline
      & $C'_1$ - $T'_1$ & $C'_2$ - $T'_2$ & $C'_3$ - $T'_3$ & $C'_4$ - $T'_4$ \\ \hline
    {\it Best Match} & 0.411 & 0.407 & 0.414 & 0.41 \\ \hline
  \end{tabular}
\caption{The similarity between the trees and the clusterings in the Blogosphere 
over the period of $4$ weeks}
\label{fig:weblog2}
\end{figure*}

We also performed the same experiment on the Blogosphere, where we randomly 
picked the set of $4$ consecutive weeks and discovered groups by performing 
our heuristic clustering approach to obtain clustering 
$C'_1$, $C'_2$, $C'_3$ and $C'_4$. Next we found exact tree like communication 
structures $T'_1$, $T'_2$, $T'_3$ and $T'_4$ for each week respectively.

We used the {\it Best Match}  and the {\it K-center} algorithms to measure 
the amount of similarity between these two sets. You can find the results of these measurements 
in the Fig.~\ref{fig:enron2} and \ref{fig:weblog2}. The groups which we find 
using a heuristic clustering approach compare well to the actual tree-like 
structures present in the data. 

Additionally we would like to bring your attention to the Fig.~\ref{fig:blogs} and 
\ref{fig:weblog2} to point out that despite the rapid and dynamic rate of change in the 
Blogosphere as a system, the relative distance which is found between respective 
$T$'s and $C$'s remained low. This suggests that our algorithms for discovering planning 
hidden groups are able to perform well for very dynamic systems as Blogosphere as well as 
the more stable systems as ENRON.
Notice that the slightly higher similarity in the ENRON data could be caused by the fact 
that the underlying hierarchy like structure of the ENRON company resembles 
the tree like patterns much more then a chaotic Blogosphere. Nevertheless the 
discovered similarity for the groups in the Blogosphere data is still suggesting that the groups 
we discover using our heuristic approach are similar in their nature to the 
groups discovered by performing tree mining. Thus this section provides yet 
another prove of that our algorithms find real and 
meaningful groups in the streaming communication data by using no message content.

\section{Conclusions}
\label{sec:conclusion}

In this work, we described algorithms for discovering
hidden groups based only on communication data. 
The structure imposed by the need to plan was a very general one, namely
connectivity. Connectivity should be a minimum requirement
for the planning to take place, and perhaps adding further constraints can
increase the accuracy or the efficiency.

In our
algorithms there is no fixed communication cycle and the group's planning 
waves of communications may overlap. 
The algorithm first finds statistically significant 
chain and sibling triples.
Using a heuristic to build from triples, 
we find hidden groups of larger sizes. Using a moving window and matching algorithms
we can track the evolution of the organizational structure as well 
as hidden group membership. Using a tree querying algorithm one can query a 
hierarchical structure to check if it exists in the data. The tree mining algorithm 
finds exactly all of the frequent trees and can be used for verification purposes. 
Our statistical algorithms serve to narrow down the set of possible 
hidden groups that need to be analyzed further.

We validated our algorithms on real data and 
our results indicate that the hidden group algorithms do indeed find
meaningful groups. 
Our algorithms don't use communication content and don't 
differentiate between the natures of the hidden groups discovered, 
for example some of the hidden groups may be 
malicious and some may not. The groups found 
by our algorithms can be further
studied by taking into account the form and the content 
of each communication, to get a better overall 
result and to identify the truly 
suspicious groups. 

\bibliographystyle{abbrv}
\bibliography{IEEEabrv,master}

\end{document}